\newtheorem{theorem}{Theorem}
\newtheorem{lemma}[theorem]{Lemma}
\newtheorem{remark}[theorem]{Remark}
\newtheorem{definition}[theorem]{Definition}
\newcommand\mfbox[1]{\,\,\fbox{#1}\,\,}
\newcommand\K{$\mathds K$}
\newcommand\exec{\textnormal{\,\guillemotright\,}}
\newcommand\ifthenelse[3]{\textnormal{\bf if~}#1\textnormal{\bf~then~}#2\textnormal{\bf~else~}#3\textnormal{\bf~end}}
\newcommand\thenelse[2]{\textnormal{\bf then~}#1\textnormal{\bf~else~}#2\textnormal{\bf~end}}
\newcommand\loopend[1]{\textnormal{\bf loop~}#1\textnormal{\bf~end}}
\newcommand\create[1]{\textnormal{\bf create~}#1}
\newcommand\forget[1]{\textnormal{\bf forget~}#1}
\newcommand\call[1]{\textnormal{\bf call}\,#1}
\newcommand\assgn{\,:\!=\,}
\newcommand\lasso[2]{\textnormal{lasso(}#1,#2\textnormal{)}}
\newcommand\reg[2]{\textnormal{reg(}#1,#2\textnormal{)}}
\newcommand\cellS[2]{\small \langle\,\, #1 \,\,\rangle_{\textnormal{#2}}}
\newcommand\cellSL[2]{\small \langle\,\, #1  \,\,\ldots\rangle_{\textnormal{#2}}}
\newcommand\cellSR[2]{\small \langle\ldots\,\,  #1 \,\,\rangle_{\textnormal{#2}}}
\newcommand\cellU[3]
\newcommand\cellUL[3]
\newcommand\cellUR[3]
\mathchardef\ls="213C    
\mathchardef\gr="213E    
\newenvironment{todo}{\bigskip\hrule\medskip\noindent}{\medskip\hrule\bigskip}
\journal{Science of Computer Programming}
\begin{document}

\begin{frontmatter}



\title{On the Verification of SCOOP Programs}


\author[ETH]{Georgiana Caltais}
\ead{georgiana.caltais@inf.ethz.ch}
\author[ETH,RU]{Bertrand Meyer}
\ead{bertrand.meyer@inf.ethz.ch}

\address[ETH]{Department of Computer Science, ETH Z\"urich, Switzerland}
\address[RU]{Software Engineering Lab, Innopolis University, Russia}

\begin{abstract}
In this paper we focus on the development of a toolbox for the verification of programs in the context of SCOOP -- an elegant concurrency model, recently formalized based on Rewriting Logic (RL) and Maude.
SCOOP is implemented in Eiffel and its 
applicability is demonstrated also from a practical perspective, in the area of robotics programming.
Our contribution consists in devising and integrating an alias analyzer and a Coffman deadlock detector under the roof of the same RL-based semantic framework of SCOOP.
This enables using the Maude rewriting engine and its LTL model-checker ``for free'', in order to perform the analyses of interest.
We discuss the limitations of our approach for model-checking deadlocks and provide solutions to the state explosion problem. The latter is mainly caused by the size of the SCOOP formalization which incorporates all the aspects of a real concurrency model.
On the aliasing side, we propose an extension of a previously introduced alias calculus based on program expressions, to the setting of unbounded program executions such as infinite loops and recursive calls.
Moreover, we devise a corresponding executable specification easily implementable on top of the SCOOP formalization.
An important property of our extension is that, in non-concurrent settings, the corresponding alias expressions can be over-approximated in terms of a notion of regular expressions. This further enables us to derive an algorithm that always stops and provides a sound over-approximation of the ``may aliasing'' information, where soundness stands for the lack of false negatives.
\end{abstract}

\begin{keyword}
concurrency \sep
SCOOP \sep 
operational semantics \sep
alias analysis \sep
deadlock detection \sep
Maude \sep
rewriting logic




\end{keyword}

\end{frontmatter}


\section{Introduction}\label{intro}

In light of the widespread deployment and complexity of concurrent systems, the development of corresponding frameworks for rigorous design and analysis has been a great challenge.
Along this research direction, the focus can be two-fold. On the one hand, the interest might be on formalizing and reasoning about a concurrency model, its characteristic concepts and synchronization mechanisms, for instance. On the other hand, of equal importance, the efforts might be directed towards checking the behavior of concurrent applications and their properties. These two research areas are closely related: it is often the case that, for example, analysis tools for concurrent applications depend on the underlying concurrency model. Hence, the development of a \emph{unifying framework for the design and analysis of both the model and its applications} is of interest.

In this paper we are targeting SCOOP~\cite{DBLP:conf/acsd/MorandiSNM13}, a simple object-oriented programming model for concurrency. Two main characteristics make SCOOP simple: 1) just one keyword programmers have to learn and use in order to enable concurrent executions, and 2) the burden of orchestrating concurrent executions is handled within the model, therefore reducing the risk of correctness issues. The reference implementation is Eiffel~\cite{DBLP:books/ph/Meyer91}, but implementations have also been built on top of languages such as Java. The success of SCOOP is demonstrated not only from a research perspective, but also from a practical perspective, with applications appearing, for instance, in the area of robotics programming~\cite{DBLP:conf/iros/RusakovSM14}.

The basis of a framework for the design and analysis of the SCOOP model has already been set. In this respect, we refer to the recent formalization of SCOOP in~\cite{DBLP:conf/acsd/MorandiSNM13} based on Rewriting Logic (RL)~\cite{DBLP:conf/fct/MeseguerR11}, which is ``executable'' and straightforwardly implementable in the programming language Maude~\cite{DBLP:conf/maude/2007}. In~\cite{DBLP:conf/maude/2007} these capabilities have been successfully exploited in order to reason on the original SCOOP model and to identify a number of design flaws.

Moreover, an executable semantics can be exploited in order to formalize and ``run'' analysis tools for SCOOP programs as well. This facilitates the extension of the aforementioned SCOOP formalization to the level of a unifying executable semantic framework for the design and analysis of both the model and its concurrent applications.
In this paper we focus on the \emph{development of a RL-based toolbox for the analysis of SCOOP programs} on top of the formalization in~\cite{DBLP:conf/acsd/MorandiSNM13}. We are interested in constructing an {alias analyzer} and a {deadlock detector}.

\emph{Alias analysis} has been an interesting research direction for the verification and optimization of programs.
One of the challenges along this line of research has been the undecidability of determining whether two expressions in a program \emph{may} reference the same object. A rich suite of  approaches aiming at providing a satisfactory balance between scalability and precision has already been developed in this regard. Examples include:
 (i) intra-procedural frameworks~\cite{Landi:1991:PAP:99583.99599,Landi:1992:USA:161494.161501} that handle isolated functions only, and their inter-procedural counterparts~\cite{Landi:1992:USA:161494.161501,Myers:1981:PID:567532.567556,Hind:1999:IPA:325478.325519} that consider the interactions between function calls;
(ii) {type-based} techniques~\cite{Diwan:1998:TAA:277652.277670};
(iii) flow-based techniques~\cite{Burke-flow-ins,Choi:1993:EFI:158511.158639} that establish aliases depending on the control-flow information of a procedure;
(iv) context-(in)sensitive approaches~\cite{Emami:1994:CIP:178243.178264,Wilson:1995:ECP:207110.207111} that depend on whether the calling context of a function is taken into account or not;
(v) field-(in)sensitive approaches~\cite{Mine:2006:FVA:1134650.1134659,Albert:2009:FVA:1693345.1693376} that depend on whether the individual fields of objects in a program are traced or not.

There is a huge literature on heap analysis for aliasing~\cite{DBLP:conf/paste/Hind01}, but hardly any paper that presents a calculus allowing the derivation of alias relations as the result of applying various instructions of a programming language. Hence, of particular interest for the work in this paper is the untyped, flow-sensitive, field sensitive, inter-procedural and context-sensitive calculus for {may aliasing}, introduced in~\cite{Meyer-aliasing-13}. The calculus covers most of the aspects of a modern object-oriented language, namely: object creation and deletion, conditionals, assignments, loops and (possibly recursive) function calls. The approach in~\cite{Meyer-aliasing-13} abstracts the aliasing information in terms of explicit access paths~\cite{DBLP:conf/pldi/LarusH88} referred to as \emph{alias expressions} straightforwardly computed in an equational fashion, based on the language constructs. As we shall see later on in this paper, the language-based expressions can be exploited in order to reason on ``may aliasing'' in a finite number of steps in non-concurrent settings and, moreover, can be easily incorporated in the semantic rules defining SCOOP in~\cite{DBLP:conf/acsd/MorandiSNM13}.

\emph{Deadlock} is one of the most serious problems in concurrent systems. It occurs when two or more executing threads are each waiting for the other to finish.
Along time, the complexity of the problem determined various approaches to combat deadlocks~\cite{Shih:1990:SDD:896944}. Examples include:
(i) deadlock prevention~\cite{Andrews:1982:ODP:800220.806694} which ensures that at least one of the deadlock conditions cannot hold, (ii) deadlock avoidance~\cite{Minoura:1982:DAR:322344.322351} that provides a priori information so that the system can predict and avoid deadlock situations, (iii) deadlock detection~\cite{chandy1982distributed,badal1983deadlock} that detects and recovers from a deadlock state.

Our focus is on \emph{deadlock detection} for SCOOP programs. We base our work on the fact that this type of analysis is in strict connection with the underlying model of interest. Consequently, as described in the corresponding subsequent sections, our approach consists in formalizing deadlocks in the context of the SCOOP concurrency model and enriching its semantics in~\cite{DBLP:conf/acsd/MorandiSNM13} with the equivalent operational-based definition of deadlocks.
This enables using the Maude rewriting capabilities ``for free'' in order to test SCOOP programs for deadlock.
Nevertheless, the more ambitious goal of  using the Maude LTL model-checker for deadlock detection is not straightforward. As discussed in more detail later on in this paper, verification of deadlocks was possible after reducing the SCOOP semantics in~\cite{DBLP:conf/acsd/MorandiSNM13} and abstracting it based on aliasing information, and modifying a series of implementation aspects (such as indexed-based parameterizations) that determined state explosion issues.


\newpage
\paragraph{Our contribution}{
This paper is an extended version of~\cite{DBLP:journals/corr/Caltais14} where we proposed:
\begin{enumerate}\itemsep1pt
\item[1.] a translation of the (finite) alias calculus in~\cite{Meyer-aliasing-13} to the setting of unbounded program executions such as infinite loops and recursive calls, together with a sound over-approximation technique based on  (finitely representable) ``regular alias expressions'' capturing unbounded executions in non-concurrent settings;
\item[2.] a RL-based specification of the extended calculus  suitable for integration within the SCOOP formalization in~\cite{DBLP:conf/acsd/MorandiSNM13} (for this purpose we chose the {\K} semantic framework as a RL-based formalism enabling compact and modular definitions);
\item[3.] an algorithm for ``may aliasing'' (exploiting the finiteness property in 1.) that always terminates in non-concurrent settings.
\end{enumerate}
The current work adds to 1.--3. above:
\begin{enumerate}\itemsep1pt
\item[4.] the full RL-based specification in 2. and the complete formal proofs showing the soundness of the over-approximating technique based on ``regular alias expressions'';
\item[5.] examples of exploiting the algorithm in 3. and its implementation on top of the SCOOP formalization in Maude~\cite{DBLP:conf/acsd/MorandiSNM13};
\item[6.] the formalization and integration of a deadlock detection mechanism on top of the SCOOP operational semantics~\cite{DBLP:conf/acsd/MorandiSNM13}, together with discussions on the limitations of our approach and associated workarounds.
\end{enumerate}
}

\paragraph{Paper structure}{
The paper is organized as follows. 
In Section~\ref{sec:SCOOP} we provide a brief overview of SCOOP.
In Section~\ref{sec:alias-calc} we introduce the extension of the alias calculus in~\cite{Meyer-aliasing-13} to unbounded executions. In Section~\ref{sec:implem-k} we provide the full RL-based executable specification of the calculus. The implementation in SCOOP and further applications are discussed in Section~\ref{sec:alias-SCOOP}.
Section~\ref{sec:coffman-deadlocks} is dedicated to deadlocking in SCOOP.
In Section~\ref{sec:discussion} we draw the conclusions, discuss some of the related works and provide pointers to future developments.
}

\section{Biref introduction to  SCOOP}
\label{sec:SCOOP}

As already stated, the purpose of the current work is the development of a toolbox for the analysis of SCOOP programs by exploiting the semantics proposed in~\cite{DBLP:conf/acsd/MorandiSNM13}.
SCOOP is particularly attractive due to its simplicity and elegance, as it allows the switch from sequential to concurrent programming in a rather straightforward fashion, by means of just one keyword, namely, \emph{separate}. Transparent to the user, the key notion in SCOOP is the processor, or \emph{handler} (that can be a CPU, or it can also be implemented in software, as a process or thread). Handlers are in charge of executing the routines of ``separate'' objects, in a concurrent fashion.

For an example, assume a processor $p$ that performs a call $o.f(a_1, a_2, \ldots)$ on an object $o$. If $o$ is declared as ``separate'', then $p$ sends a request for executing $f(a_1, a_2, \ldots)$ to $q$ -- the handler of $o$ (note that $p$ and $q$ can coincide). Meanwhile, $p$ can continue. Moreover, assume that $a_1, a_2, \ldots$ are of ``separate'' types. According to the SCOOP semantics, the application of the call  $f(\ldots)$ will \emph{wait} until it has been able to \emph{lock} all the separate objects associated to $a_1, a_2, \ldots$. This mechanism guarantees exclusive access to these objects. 
Given a processor $p$, by $W(p)$ we denote the set of processors $p$ \emph{waits} to release the resources $p$ needs for its asynchronous execution. Orthogonally, by $H(p)$ we represent the set of resources (more precisely, resource handlers that) $p$ already acquired.

The semantics of SCOOP in~\cite{DBLP:conf/acsd/MorandiSNM13} is defined over tuples of shape   
\begin{equation}
\label{eq:tuple}
\langle p_1 \,::\, St_{1} \mid \ldots \mid p_n \,::\, St_{n}, \sigma \rangle
\end{equation}
where, $p_i$ denotes a processor (for $i \in \{1, \ldots, n\}$), $St_i$ is the call stack of $p_i$ and $\sigma$ is the {\it state} of the system. States hold information about the {\it heap} (which is a mapping of references to objects) and the {\it store} (which includes formal arguments, local variables, {\it etc}.).
Processors communicate via \emph{channels}.

Roughly speaking, one could classify the operational rules formalizing SCOOP in~\cite{DBLP:conf/acsd/MorandiSNM13} in: a) \emph{language rules} that provide the semantics of language constructs such as ``{\bf if \ldots then \ldots else \ldots end}'' or ``{\bf until \ldots loop \ldots end}'', and b) \emph{control rules} implementing mechanisms such as locking or scheduling.

For an example in category a) above, consider the rules specifying ``{\bf if}'' instructions:
\begin{equation}
\label{eq:if-then-else}
\dfrac{\textnormal{a is fresh}}{
\begin{array}{cc}
\langle p \,::\,\ifthenelse{e}{St_1}{St_2}\,;\, St,\, \sigma \rangle \rightarrow\\
\langle p\,::\, \textnormal{eval}(a, e);\, \textnormal{wait}(a);\, provided~a.data~then~St_1~else~St_2;\,St,\, \sigma\rangle
\end{array}
}
\end{equation}
\begin{equation}
\label{eq:if-then}
\dfrac{.}{
\langle p\,::\, provided~true~then~St_1~else~St_2;\,St,\, \sigma\rangle \rightarrow
\langle p\,::\, St_1;\,St,\, \sigma\rangle
}
\end{equation}
\begin{equation}
\label{eq:if-else}
\dfrac{.}{
\langle p\,::\, provided~false~then~St_1~else~St_2;\,St,\, \sigma\rangle \rightarrow
\langle p\,::\, St_2;\,St,\, \sigma\rangle
}
\end{equation}

Intuitively, ``eval$(a, e)$'' evaluates $e$ and puts the result on a fresh channel $a$ and ``wait$(a)$'' enables processor $p$ to use the evaluation result stored in $a.data$. It is straightforward to see that, according to~(\ref{eq:if-then}), in case the condition $e$ is evaluated to $true$ then the ``{\bf if} branch'' $St_1$ is placed on top of the call stack of $p$. Otherwise, based on~(\ref{eq:if-else}), if $e$ is evaluated to $false$, the ``{\bf else} branch'' is executed.

As we shall see in Section~\ref{sec:alias-SCOOP}, an operational view on the alias calculus in~\cite{Meyer-aliasing-13} exploiting the instructions of a programming language will enable a straightforward implementation on top of the ``language rules'' of SCOOP.

For the case b) above we refer to the locking rule:
\begin{equation}
\label{eq:locks}
\dfrac{\forall q_i \in \{q_1,\ldots,q_m\}\,:\,\sigma.rq\_locked(q_i) = false}
{
\begin{array}{c}
\langle p\,::\,lock(\{q_1,\ldots\,q_m\});\,St,\, \sigma\rangle \rightarrow\\
\langle p\,::\,St,\, \sigma.lock\_rqs(p, \{q_1,\ldots\,q_m\})\rangle
\end{array}
}
\end{equation}
\noindent
stating that a processor $p$ can lock a set of handlers $\{q_1,\ldots,q_m\}$ by calling $lock\_rqs$ on the state $\sigma$ whenever none of the handlers $q_i$ has previously been acquired by other processors, {\it i.e.}, $\sigma.rq\_locked(q_i) = false$.

As it will become clear in Section~\ref{sec:coffman-deadlocks}, ``control rules'' pave the way to an immediate implementation of a corresponding ``deadlock rule'' on top of the Maude formalization of SCOOP in~\cite{DBLP:conf/acsd/MorandiSNM13}.

\section{The alias calculus}
\label{sec:alias-calc}

The calculus for \emph{may aliasing} introduced in~\cite{Meyer-aliasing-13} abstracts the aliasing information in terms of explicit access paths referred to as ``alias expressions''. Consider, for an example,the  case of a linked list. We write $x_i$ ($i \geq 0$) to represent node $i$ in the list, and use a setter to assign the next node of the list:
\begin{equation}
\label{eq:intor-ex}
\begin{array}{l}
\create{x_0}\\
\loopend{
\\\hspace{10pt}i\,:=i+1
\\\hspace{10pt}\create{x_i}
\\\hspace{10pt}x_i . set\_next(x_{i-1})\\\hspace{-4.5pt}
}
\end{array}
\end{equation}
The result of the execution of the code above can be intuitively depicted as the infinite sequence:
\[
x_0 \xleftarrow{next} x_1 \xleftarrow{next} \ldots x_{k-1} \xleftarrow{next} x_k \xleftarrow{next} x_{k+1} \ldots
\]
Hence, $x_0$ becomes aliased to $x_1.next$, $x_2.next.next$, $x_3.next.next.next$, so on and so on. In short, the set of associated alias expressions can be equivalently written as:
\begin{equation}
\label{eq:aliasing-example}
\{[x_i,\, x_{i+k}.next^k] \mid i \geq 0\land k \geq 1\}.
\end{equation}
The sources of imprecision introduced by the calculus in~\cite{Meyer-aliasing-13} are limited to ignoring tests in conditionals, and to ``cutting at length $L$'' for the case of possibly infinite alias relation corresponding to unbounded executions as in~(\ref{eq:intor-ex}). The cutting technique considers sequences longer than a given length $L$ as aliased to all expressions.

In this section we define an extension of the calculus in~\cite{Meyer-aliasing-13}, to unbounded program executions. Moreover, based on the idea behind the \emph{pumping lemma for regular languages}~\cite{Rabin:1959:FAD:1661907.1661909}, we devise a corresponding sound over-approximation of ``may aliasing'' in terms of regular expressions, applicable in sequential contexts. This paves the way to developing an algorithm for the aliasing problem, as presented in Section~\ref{sec:implem-k}, in the formal setting of the {\K} semantic framework~\cite{DBLP:journals/jlp/RosuS10}. Note that {\K} is used more as a notational convention, as its operational flavor enables a straightforward integration within the SCOOP formalization in~\cite{DBLP:conf/acsd/MorandiSNM13}.

\paragraph{Brief overview of the alias calculus}{
We proceed by recalling the notion of \emph{alias relation} and a series of associated notations and basic operations, as introduced in~\cite{Meyer-aliasing-13}.

We call an \emph{expression} a (possibly infinite) path of shape $x.y.z.\,\ldots$, where $x$ is a local variable, class attribute or {\it{Current}}, and $y, z, \ldots$ are attributes. Here, {\it{Current}}, also known as {\it{this}} or {\it{self}}, stands for the current object.
For an arbitrary alias expression $e$, it holds that $e . {\it Current} = {\it Current} . e = e$. 
Let $E$ represent the set of all expressions of a program. An \emph{alias relation} is a symmetric and irreflexive binary relation over $E \times E$.

Given an alias relation $r$ and an expression $e$, we define
\[
r/e = \{e\} \cup \{x:\, E \mid [x,e] \in r\}
\]
denoting the set consisting of all elements in $r$ which are aliased to $e$, plus $e$ itself.

Let $x$ be an expression; we write $r\,-\,x$ to represent $r$ without the pairs with one element of shape $x.e$.

We say that an alias relation is \emph{dot complete} whenever for any $t, u,v$ and $a$ it holds that if $[t,u]$ and $[t.a,v]$ are alias pairs, then $[u.a, v]$ is an alias pair and, moreover, if $a$ is in the domain of $t$, then $[t.a, u.a]$ is an alias pair.
By the ``domain of $t$'' we refer to a method or a field in the class corresponding to the object referred by the expression associated to $t$. For instance, given a class NODE with a field $next$ of type NODE, and a NODE object $x$, we say that $next$ is in the domain of $t \,=\, x.next.next$.
For the sake of brevity, we write {\it dot-complete}$(r)$ for the closure under dot-completeness of a relation $r$.

The notation $r[x=u]$ represents the relation $r$ augmented with pairs $[x,y]$ and made dot complete, where $y$ is an element of $u$.
}

\subsection{Extension to unbounded executions}
\label{sec:ext-inf-expr}

We further introduce an extension of the alias calculus in~\cite{Meyer-aliasing-13} to infinite alias relations corresponding to unbounded executions such as infinite loops or recursive calls. The main difference in our approach is reflected by the definition of loops, which now complies to the usual fixed-point denotational semantics.

The alias calculus is defined by a set of axioms ``describing'' how the execution a program affects the aliasing between expressions. As in~\cite{Meyer-aliasing-13}, the calculus ignores tests in conditionals and loops. The \emph{program instructions} are defined as follows:
\begin{equation}
\label{eq:BNF-control-struct}
\begin{array}{r c l}
p & ::= & p \,;\, p \mid \thenelse{p}{p} \mid\\
& & \create{x} \mid \forget{x} \mid t \assgn s \mid\\
&& \loopend{p} \mid \call{f(l)} \mid x.\call{f(l)}.
\end{array}
\end{equation}
In short, we write
$
r \exec p
$
to represent the alias information obtained by executing $p$ when starting with the initial alias relation $r$.

The axiom for sequential composition is defined in the obvious way:
\begin{equation}
\label{eq:def-seq-comp}
r \exec (p\,;\,q) = (r \exec p) \exec q.
\end{equation}

Conditionals are handled by considering the union of the alias pairs resulted from the execution of the instructions corresponding to each of the two branches, when starting with the same initial relation:
\begin{equation}
\label{eq:def-then-else}
r \exec (\thenelse{p}{q})  = r \exec p\,\, \cup\,\, r \exec q.
\end{equation}

As previously mentioned,  we define $r \exec \loopend{p}$ according to its informal semantics : ``execute $p$ repeatedly any number of times, including zero''. The corresponding rule is:
\begin{equation}
\label{eq:def-loop}
r \exec (\loopend{p}) = \bigcup_{n \in \mathds{N}} (r \exec p^n)
\end{equation}
where $\cup$ stands for the union of alias relations, as above.
This way, our calculus is extended to infinite alias relations. 
This is the main difference with the approach in~\cite{Meyer-aliasing-13} that proposes a ``cutting'' technique restricting the model to a maximum length $L$.  In~\cite{Meyer-aliasing-13}, sequences longer than $L$ are considered as aliased to all expressions. Orthogonally, for sequential settings, we provide finite representations of infinite alias relations based on over-approximating regular expressions, as we shall see in Section~\ref{sec:sound-over-approx}. 

Both the creation and the deletion of an object $x$ eliminate from the current alias relation all the pairs having one element prefixed by $x$:
\begin{equation}
\label{eq:def-create-delete}
\begin{array}{rcl}
r \exec (\create{x}) & = & r - x\\
r \exec (\forget{x}) & = & r - x.
\end{array}
\end{equation}

The (qualified) function calls comply to their initial definitions in~\cite{Meyer-aliasing-13}: 
\begin{equation}
\label{eq:def-qualified-call}
\begin{array}{rcl}
r \exec (\call{f(l)}) & = & (r[f^\bullet:l])\exec \mid f \mid\\
r \exec (x.\call{f(l)}) & = & x.((x'.r) \exec \call{f(x'.l)}).
\end{array}
\end{equation}
Here $f^\bullet$ and $\mid f \mid$ stand for the formal argument list and the body of $f$, respectively, whereas $r[u:v]$ is the relation $r$ in which every element of the list $v$ is replaced by its counterpart in $u$. Intuitively, the negative variable $x'$ is meant to transpose the context of the qualified call to the context of the caller. Note that ``$.$'' ({\it i.e.}, the constructor for alias expressions) is generalized to distribute over lists and relations:
$x.[a,b,\ldots] = [x.a, x.b, \ldots]$.

For an example, consider a class $C$ in an OO-language, and an associated procedure $f$ that assigns a local variable $y$, defined as: $f(x) \,\{ \,\, y \assgn x \,\, \}$.
Then, for instance, the aliasing for $a.\call{f(a)}$ computes as follows:
\[
\begin{array}{rc}
\emptyset \,\,\exec\,\, a.\call{f(a)} & = \\
a.(a'.\emptyset \,\,\exec\,\, y\,:=a'.a) & =\\
a.(\emptyset \,\,\exec\,\, y\,:=\textnormal{\it Current}) & =\\
\textnormal{\it dot-complete}(\{[a.y, a]\}).
\end{array}
\]

Recursive function calls can lead to infinite alias relations. In sequential settings, as for the case of loops, the mechanism exploiting sound regular over-approximations in order to derive finite representations of such relations is presented in the subsequent sections.

The axiom for assignment is as well in accordance with its original counterpart in~\cite{Meyer-aliasing-13}:
\begin{equation}
\label{eq:def-assign}
\begin{array}{rcl}
r \exec (t \assgn s) & = & {\textnormal{\bf given~}} r_{1} = r[ot = t]\\
&& {\textnormal{\bf then~}} (r_{1} - t)[t\,=\,(r_1 \slash s \,-\,t)] - ot  {\textnormal{\bf~end}}\\
\end{array}
\end{equation}
where $ot$ is a fresh variable (that stands for ``old $t$'').
Intuitively, the aliasing information w.r.t. the initial value of $t$ is ``saved'' by associating $t$ and $ot$ in $r$ and closing the new relation under dot-completeness, in $r_1$. Then, the initial $t$ is ``forgotten'' by computing $r_1 - t$ and the new aliasing information is added in a consistent way. Namely, we add all pairs $(t, s')$, where $s'$ ranges over {$r_1 \slash s \,-\, t$} representing all expressions already aliased with $s$ in $r_1$, including $s$ itself, but without $t$. Recall that alias relations are not reflexive, thus by eliminating $t$ we make sure we do not include pairs of shape $[t, t]$. Then, we consider again the closure under dot-completeness and forget the aliasing information w.r.t. the initial value of $t$, by removing $ot$.

\begin{remark}
It is worth discussing the reason behind \emph{not} considering transitive alias relations.
Assume the following program: 
\[
\thenelse{x \assgn y}{y \assgn z}
\]
Based on the equations~(\ref{eq:def-then-else}) and~(\ref{eq:def-assign}) handling conditionals and assignments, respectively, the calculus correctly identifies the alias set: $\{[x, y], [y, z]\}$. Including $[x, z]$ would be semantically equivalent to the execution of the two branches in the conditional at the same time, which is not what we want.
\end{remark}


\subsection{A sound over-approximation}
\label{sec:sound-over-approx}

\label{rm:unfolding}
In a sequential setting, the challenge of computing the alias information in the context of (infinite) loops and recursive calls reduces to evaluating their corresponding ``unfoldings'', captured by expressions of shape
\[
r \exec p^{\omega},
\]
with $\omega$ ranging over naturals plus infinity, r an (initial) alias relation ($r = \emptyset$), and $p$ a \emph{basic control block} defined by:
\begin{equation}
\label{eq:BNF-basic-struct}
\begin{array}{r c l}
p & ::= & p \,;\, p \mid \thenelse{p}{p} \mid\\
& & \create{x} \mid \forget{x} \mid\\
&& t \assgn s.
\end{array}
\end{equation}
The value $r \exec p^{\omega}$ refers to the alias relation obtained by recursively executing the control block $p$, and it is calculated in the expected way:
\[
\begin{array}{rcl}
r \exec p^{0} & = & r\\
r \exec p^{k+1} & = & (r \exec p^{k}) \exec p.
\end{array}
\]

Consider again the code in~(\ref{eq:intor-ex}):
\begin{equation}
\notag
\begin{array}{l}
\create{x_0}\\
\loopend{
\\\hspace{10pt}i\,:=i+1
\\\hspace{10pt}\create{x_i}
\\\hspace{10pt}x_i . set\_next(x_{i-1})\\\hspace{-4.5pt}
}
\end{array}
\end{equation}
Its execution generates an alias relation
including an infinite number of pairs of shape: 
\begin{equation}
\label{eq:inf-rel-next}
[x_i, x_{i+1}.next],\, [x_i, x_{i+2}.next.next],\, [x_i, x_{i+3}.next.next.next] \ldots~~.
\end{equation}
A similar reasoning does not hold for concurrent applications, where process interaction is not ``regular''.

In what follows we provide a way to compute finite representations of infinite alias relations in sequential settings.
The key observation is that alias expressions corresponding to unbounded program executions grow in a regular fashion. See, for instance, the aliases in~(\ref{eq:inf-rel-next}), which are pairs of type $[x_i, x_{i+k}.next^{k \geq 1}]$.

Regular expressions are defined similarly to the regular languages over an alphabet.
We say that an expression is \emph{regular} if it is a local variable, class attribute or {\it{Current}}. Moreover, the concatenation $e_1\,.\,e_2$ of two regular expressions $e_1$ and $e_2$ is also regular. Given a regular alias expression $e$, the expression $e^*$ is also regular; here $(-)^*$ denotes the Kleene star~\cite{Kleene56}. We call an alias relation \emph{regular} if it consists of pairs of regular expressions.

\begin{lemma}
\label{lm:sequential-regularity}
Assume $p$ a program built according to the rules in~(\ref{eq:BNF-control-struct}).
Then, in a sequential setting, the relation $\emptyset \exec p$ is regular.
\end{lemma}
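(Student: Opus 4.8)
The plan is to induct on the structure of the program $p$ as given by the grammar in~(\ref{eq:BNF-control-struct}), proving the slightly stronger statement that for \emph{any} regular alias relation $r$, the relation $r \exec p$ is again regular. Strengthening the hypothesis this way is essential because all of the compound constructs ($;$, conditionals, loops, calls) feed intermediate relations into sub-derivations, and $\emptyset$ is trivially regular, so the original statement follows by taking $r = \emptyset$. The atomic cases are the base of the induction: $\create{x}$ and $\forget{x}$ only \emph{delete} pairs from $r$ (equation~(\ref{eq:def-create-delete})), so regularity is preserved immediately; for $t \assgn s$ (equation~(\ref{eq:def-assign})) one must check that each ingredient — introducing the fresh pair $[t, ot]$, taking $r \slash s$, the deletions $r - t$ and $-\,ot$, and most importantly the dot-completion closure — sends regular relations to regular relations. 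The key sub-lemma here is that \textbf{dot-complete} preserves regularity: the closure rules only ever append a single attribute $a$ to existing (regular) expressions or pair up existing regular expressions, so no non-regular shapes can be manufactured.

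Next I would dispatch the easy compound cases. Sequential composition~(\ref{eq:def-seq-comp}) is a direct double application of the induction hypothesis. Conditionals~(\ref{eq:def-then-else}) are a finite union of two regular relations, hence regular. Function calls, qualified and unqualified~(\ref{eq:def-qualified-call}), reduce to executing the body $\mid f \mid$ on a renamed/prefixed copy of $r$; since renaming formals ($r[u:v]$) and prefixing by a variable ($x.r$, $x'.r$) both preserve regularity of a relation, and the body is again a program over the grammar, the induction hypothesis (applied to the body, with a structurally-based or call-graph-based measure) closes these cases — though for \emph{recursive} calls this is exactly the place where one must appeal to the surrounding discussion that, in the sequential setting, the recursion is ultimately handled by the same regular-unfolding mechanism as loops, so I would treat the recursion analogously to the loop case below.

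The loop case is the crux and the main obstacle. Equation~(\ref{eq:def-loop}) defines $r \exec \loopend{p} = \bigcup_{n \in \mathds{N}} (r \exec p^{n})$, an \emph{infinite} union, so "each $r \exec p^n$ is regular" (which follows from the induction hypothesis plus induction on $n$) is not by itself enough — a countable union of regular relations need not be described by a regular relation. The real content, flagged in the paragraph preceding the lemma, is the pumping-style argument: because each application of the basic control block $p$ (grammar~(\ref{eq:BNF-basic-struct})) grows alias expressions only by bounded, repeating increments, the sequence of relations $r \exec p^n$ is eventually "periodic" in a shape-sense, so the infinite union collapses to finitely many regular patterns, each expressible with a Kleene star (as in the pattern $[x_i, x_{i+k}.next^k]$ from~(\ref{eq:inf-rel-next})). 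Concretely I would argue that there are only finitely many "expression templates" (modulo the exponent/iteration parameter) that can ever appear, bound the stabilization index by the number of variables and attributes in the program, and then exhibit the union as a finite set of regular pairs obtained by starring the repeating segment. I expect the delicate points to be: making the notion of "template" and "eventual periodicity" precise for \emph{pairs} of expressions under dot-completion, and verifying that dot-completion does not break the periodicity (i.e.\ the closure of a "periodic" family is still "periodic"). Once that structural stabilization is established, the remaining steps are routine bookkeeping, and the lemma follows.
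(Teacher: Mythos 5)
Your proposal takes essentially the same route as the paper: strengthen the statement to arbitrary regular $r$, note that the primitive operations ($r-x$, $r\slash s$, $r[x=u]$, dot-completion) preserve regularity (Remark~\ref{rm:op-preserve-reg}), dispatch the finite-execution constructs by structural induction (Lemma~\ref{lm:reg-fin-exec}), reduce recursive calls to loops (Remark~\ref{rm:rec-calls-loop}), and handle unbounded executions through the regular growth of alias expressions (Lemma~\ref{lm:reg-infin-exec}). If anything, your treatment of the loop case is more explicit than the paper's one-line induction on the number of nested loops: the pumping-style stabilization you sketch is precisely the content the paper defers to its lasso and regular-over-approximation machinery (Lemma~\ref{lm:reg-expr}).
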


In order to prove Lemma~\ref{lm:sequential-regularity}, we proceed by demonstrating a series of intermediate results.

\begin{remark}
\label{rm:op-preserve-reg}
We first observe that the operations $r \slash s$, $r - x$, dot-completeness and $r[x = u]$ introduced in Section~\ref{sec:alias-calc} preserve the regularity of an alias relation $r$. 
\end{remark}

Then, we define a notion of \emph{finite execution} control blocks:
\begin{equation}
\label{eq:BNF-basic-struct-finite}
\begin{array}{r c l}
p & ::= & \create{x} \mid \forget{x} \mid t \assgn s\mid \\
& & p \,;\, p \mid \thenelse{p}{p} \mid\\
& & \call{f(l)} \mid x . \call{f(l)}
\end{array}
\end{equation}
where $f$ stands for a non-recursive function.

It is easy to see that the execution of control blocks as in~(\ref{eq:BNF-basic-struct-finite}) preserve the regularity of alias relations as well.

\begin{lemma}
\label{lm:reg-fin-exec}
For all regular alias relations $r$ and $p$ a finite-execution control block, in a sequential setting, it holds that $r \exec p$ is also regular.
\end{lemma}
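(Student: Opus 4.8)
\emph{Overall approach.} I would prove the lemma by induction on $p$. Ordinary structural induction is not quite sufficient, because the two function-call productions of~(\ref{eq:BNF-basic-struct-finite}) make $r\exec p$ depend on the procedure body $\mid f\mid$, which is not a syntactic subterm of $p$. So the plan is to induct on the lexicographic pair $(d(p),|p|)$, where $|p|$ is the syntactic size of $p$ and $d(p)$ is the maximal nesting depth of procedure-body expansions that executing $p$ can trigger. Since every function named inside a finite-execution control block is \emph{non-recursive}, the part of the call graph reachable from $p$ is a finite DAG, so $d(p)$ is well defined and finite; unfolding $\call{f(l)}$ into $\mid f\mid$ strictly decreases $d$, while in all the other cases $d$ does not increase and $|p|$ strictly decreases. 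I would also rely on the standing conventions that $\mid f\mid$ is again a finite-execution control block and that, in the sequential setting, the call equations are those of~(\ref{eq:def-qualified-call}).

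\emph{Elementary blocks.} For $\create{x}$ and $\forget{x}$, equation~(\ref{eq:def-create-delete}) gives $r\exec p = r-x$, which is regular by Remark~\ref{rm:op-preserve-reg}. For $t\assgn s$, equation~(\ref{eq:def-assign}) expresses $r\exec(t\assgn s)$ as a finite composition of the operations $r[x=u]$, $r-x$ and $r/s$, each applied to the relation produced at the previous step; since every one of these preserves regularity (Remark~\ref{rm:op-preserve-reg}), so does the composition.

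\emph{Composition, conditionals, calls.} For $p;q$, equation~(\ref{eq:def-seq-comp}) gives $r\exec(p;q)=(r\exec p)\exec q$; the induction hypothesis makes $r\exec p$ regular and a second application, legitimate because the measure attached to $q$ is smaller, makes the whole relation regular. For $\thenelse{p}{q}$, equation~(\ref{eq:def-then-else}) produces a union of two relations that are regular by the induction hypothesis, and a union of relations consisting of pairs of regular expressions is again of that kind. For $\call{f(l)}$, equation~(\ref{eq:def-qualified-call}) gives $(r[f^\bullet:l])\exec\mid f\mid$: the list substitution $r[f^\bullet:l]$ merely renames variables, hence preserves regularity, and $\mid f\mid$ has strictly smaller $d$, so the induction hypothesis applies. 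For $x.\call{f(l)}$, which equals $x.((x'.r)\exec\call{f(x'.l)})$, I would additionally use that prefixing a relation by a (possibly negative) variable preserves regularity — a variable followed by a regular expression is regular — so that $x'.r$ is regular, the inner term is an instance of the unqualified-call case already treated, and prefixing its outcome by $x$ is regular again.

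\emph{Main obstacle.} The genuinely delicate point is the function-call case. There one has to fix the well-founded measure that licenses replacing $\call{f(l)}$ by $\mid f\mid$ — which is exactly where non-recursiveness of $f$ enters — and to verify the two auxiliary operations not listed in Remark~\ref{rm:op-preserve-reg}, namely the list renaming $r[u:v]$ and the prefixing $x.r$; both are immediately seen to preserve regularity. The remaining cases are routine bookkeeping with the regularity-preserving operations already catalogued.
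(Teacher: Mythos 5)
Your proposal is correct and follows essentially the same route as the paper: structural induction using the regularity-preserving operations of Remark~\ref{rm:op-preserve-reg} for the base cases, with function calls handled by unfolding into the (non-recursive) body as in~(\ref{eq:def-qualified-call}). Your only addition is to make explicit the well-founded measure justifying that unfolding and to check that list renaming and prefixing preserve regularity, details the paper leaves implicit.
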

\begin{proof}
The proof follows immediately, by induction on the structure of $p$ and Remark~\ref{rm:op-preserve-reg}.
Base cases are: $\create{x}$, $\forget{x}$ and $t \assgn s$. For function calls, the result is a consequence of their corresponding unfolding, based on the definitions in~(\ref{eq:def-qualified-call}).
\end{proof}

\begin{remark}
\label{rm:rec-calls-loop}
With respect to may aliasing, recursive calls can be handled via loops.
Consider, for instance the recursive function
\[
f(x)\,\, \{\, B_1;\,\, f(y);\,\, B_2 \,\}
\]
where $B_1$ and $B_2$ are instruction blocks built as in~(\ref{eq:BNF-control-struct}). It is intuitive to see that computing the may aliases resulted from the execution of $f(x)$ reduces executing unfoldings of shape:
\[
\loopend{B_1};\,\,\loopend{B_2}.
\] 
\end{remark}

Moreover, unbounded program executions also preserve regularity.
\begin{lemma}
\label{lm:reg-infin-exec}
For all regular alias relations $r$ and $p$ a control block that can execute unboundedly, in a sequential setting, it holds that $r \exec p$ is also regular.
\end{lemma}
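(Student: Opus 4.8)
The plan is to reduce the statement to the single case of a loop and then show, by a pumping-style argument, that the infinite union defining the loop semantics collapses to finitely many regular alias expressions.

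First I would use Remark~\ref{rm:rec-calls-loop} to rewrite any unboundedly-executing control block so that every unbounded construct is a loop $\loopend{q}$; hence it suffices to prove that $r \exec (\loopend{q})$ is regular whenever $r$ is. I would proceed by induction on the loop-nesting depth of $q$. The body $q$ is built according to~(\ref{eq:BNF-control-struct}) and may itself contain nested loops, but by the induction hypothesis (together with Lemma~\ref{lm:reg-fin-exec} for the loop-free fragments and equations~(\ref{eq:def-seq-comp}) and~(\ref{eq:def-then-else}) for composition) the one-step map $s \mapsto s \exec q$ sends regular relations to regular relations. Unfolding~(\ref{eq:def-loop}) gives $r \exec (\loopend{q}) = \bigcup_{n \in \mathds N} (r \exec q^{n})$, where each finite iterate $r \exec q^{n}$ is regular; the real work is to show that the whole union is.

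The heart of the argument is a pumping lemma in the spirit of~\cite{Rabin:1959:FAD:1661907.1661909}. I would track how the attribute words occurring in $r \exec q^{n}$ evolve with $n$. Every iteration applies the same finite sequence of calculus operations ($r/s$, $r - x$, dot-completeness, $r[x=u]$, assignment via~(\ref{eq:def-assign})), each of which acts on the attribute words appearing in expressions only by bounded surgery: deleting a bounded prefix, or prepending/appending a fixed word determined by $q$ (cf. the concrete growth pattern in~(\ref{eq:inf-rel-next})). Consequently the family $\{ r \exec q^{n} \}_{n}$ is \emph{eventually periodic}: there exist an $N$ and a finite set of words $w_1,\dots,w_k$ such that for $n \ge N$ the relation $r \exec q^{n+1}$ arises from $r \exec q^{n}$ by splicing one of the $w_j$ into a designated position of each newly generated pair. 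Iterating this, the unbounded union contributes, for each such pair, only the Kleene-starred word $w_j^{*}$ in that position, so $\bigcup_n (r \exec q^{n})$ is a finite set of pairs of regular expressions, i.e. a regular alias relation in the sense defined in Section~\ref{sec:sound-over-approx}.

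I expect the main obstacle to be making the ``eventually periodic'' claim precise: one must show that, once the unboundedly growing part of each expression is abstracted as a single pumpable segment, the number of genuinely distinct transformations the loop body can perform on the relation is finite — the analogue of the finite state set behind the pumping lemma. This requires careful bookkeeping of which expressions in $r \exec q^{n}$ are ``stable'' and which carry the growing word, and an argument that iterated dot-completeness closure introduces no new irregular growth pattern (Remark~\ref{rm:op-preserve-reg} handles one application; here it must be applied coherently along the whole sequence of iterates). Once stabilization is in place, assembling the regular expressions and checking that the resulting relation is symmetric and irreflexive is routine.
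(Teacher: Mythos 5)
Your overall skeleton — reducing recursive calls to loops via Remark~\ref{rm:rec-calls-loop} and inducting on loop-nesting depth — is exactly the paper's (one-sentence) proof. The problem is the step you yourself flag as the heart of the argument: the claim that the iterates $\{r \exec q^{n}\}_{n}$ are eventually periodic and that consequently $\bigcup_{n}(r \exec q^{n})$ \emph{is} a finite set of pairs of starred regular expressions. That claim is left unproven in your sketch, and as an equality it is false: the union in~(\ref{eq:def-loop}) is in general an infinite set of distinct pairs — for the linked-list loop it contains all pairs $[x_i,\, x_{i+k}.next^{k}]$ with $k \geq 1$, as in~(\ref{eq:inf-rel-next}) — and no finite set of Kleene-starred pairs is equal to it. The finite starred relation $\reg{r}{r'}$ of Section~\ref{sec:sound-over-approx} is only a sound \emph{over-approximation} (membership of each iterate, not equality), its proof needs the lasso hypothesis, and it is established separately in Lemma~\ref{lm:reg-expr}, and only for basic control blocks as in~(\ref{eq:BNF-basic-struct}). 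So your proposal imports the harder finite-representability question into this lemma and then asserts a version of it that cannot hold.

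What the lemma actually requires is much weaker, because the paper's definition of a regular alias relation imposes no finiteness on the relation itself: it only asks that every pair consist of regular expressions. Every expression occurring in any finite iterate $r \exec q^{n}$ is a finite dotted path of variables and attributes, hence regular as a concatenation of atomic regular expressions; this is preserved through each iteration by Remark~\ref{rm:op-preserve-reg}, Lemma~\ref{lm:reg-fin-exec}, and the induction hypothesis for nested loops. Since an arbitrary union of relations all of whose pairs consist of regular expressions again has that property, $\bigcup_{n}(r \exec q^{n})$ is regular with no periodicity or pumping argument at all. Keep your reduction and induction, drop the ``eventually periodic'' machinery; if you want finite representations, prove them the way the paper does — separately, and only as an over-approximation (Lemma~\ref{lm:reg-expr}).
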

\begin{proof}
The proof follows by induction on the number of nested loops in $p$ and Remark~\ref{rm:rec-calls-loop}.
\end{proof}

At this point, the result in Lemma~\ref{lm:sequential-regularity} follows immediately by Lemma~\ref{lm:reg-fin-exec} and Lemma~\ref{lm:reg-infin-exec}.

Inspired by  the idea behind the \emph{pumping lemma for regular languages}~\cite{Rabin:1959:FAD:1661907.1661909}, we define a \emph{lasso} property for alias relations, which identifies the repetitive patterns within the structure of the corresponding alias expressions.
The intuition is that such patterns will occur for an infinite number of times due to the execution of loops or recursive function calls.
Then, we supply sound over-approximations of ``lasso'' relations, based on regular alias expressions.

In the context of alias relations, we say that the lasso property is satisfied by $r$ and $r'$ whenever the following two conditions hold: (1) $r$ behaves like a \emph{lasso base} of $r'$. Namely, all the pairs $[e_1, e_2] \in r$ are used to generate  elements $[e'_1, e'_2] \in r'$, by repeating tails of prefixes of $e_1$ and $e_2$, respectively, and (2) $r'$ is a \emph{lasso extension} of $r$. Namely, all the pairs in $r'$ are generated from elements of $r$ by repeating tails of their prefixes.
For example, if $e_1$ above is an expression of shape $x.y.z.w$, then $e'_1$ can be $x.y.y.z.w$ if we consider the tail $y$ of the prefix $x.y$, or $x.y.z.y.z.w$ if we take the tail $y.z$ of the prefix $x.y.z$.

Formally, consider $r$ and $r'$ two alias relations, and $x_i, y_i$ and $z_i$ a set of (possibly empty) expressions, for $i \in \{1,2\}$. Then:
\begin{equation}
\label{eq:def-lasso}
\lasso{r}{r'} = 
([x_1 y_1 z_1, x_2 y_2 z_2] \in r \textnormal{~~iff~~} [x_1 y_1 y_1 z_1, x_2 y_2 y_2 z_2] \in r').
\end{equation}
For the simplicity of notation we sometimes omit the dot-separators between expressions. For instance, we write $x\,y\,z$ in lieu of $x.y.z$.

Assuming a lasso over $r$ and $r'$, we compute a relation consisting of regular expressions over-approximating $r$ and $r'$ as:
\begin{equation}
\label{eq:def-reg}
\begin{array}{rcl}
\reg{r}{r'} & = & \{[x_1 y_1^* z_1, x_2 y_2^* z_2] \,\mid\\
&& \,\,\, [x_1 y_1 z_1, x_2 y_2 z_2] \in r\,\land\\
&& \,\,\, [x_1 y_1 y_1 z_1, x_2 y_2 y_2 z_2] \in r'\} 
\end{array}
\end{equation}
where $x_i, y_i$ and $z_i$ are possibly empty expressions, for $i \in \{1,2\}$. As previously indicated, the over-approximation is sound w.r.t. the repeated application of a basic control block as in~(\ref{eq:BNF-basic-struct}), in the way that it does not introduce any false negatives:

\begin{lemma}
\label{lm:reg-expr}
Consider $r$ and $r'$ two alias relations, and $p$ a basic control block in a sequential setting. If $r\exec p = r'$ and $\lasso{r}{r'} = true$, then the following holds for all $n\geq 1$:
\[
r \exec p^{n} \in \reg{r}{r'}.
\]
\end{lemma}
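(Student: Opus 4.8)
The plan is to prove the statement by induction on $n$. First I would make precise the meaning of ``$r \exec p^{n} \in \reg{r}{r'}$'': it says that every pair $[e_1,e_2]$ of the alias relation $r \exec p^{n}$ lies in the language denoted by some template $[x_1 y_1^{*} z_1,\, x_2 y_2^{*} z_2]$ of $\reg{r}{r'}$, i.e.\ $e_i = x_i\, y_i^{k_i}\, z_i$ for suitable $k_i \ge 0$ (reading the two Kleene stars in~(\ref{eq:def-reg}) as independent only coarsens the over-approximation and so preserves soundness). For the base case $n = 1$ we have $r \exec p^{1} = r \exec p = r'$ by hypothesis; pick any $[e_1,e_2]\in r'$ and use the ``lasso extension'' half of $\lasso{r}{r'}$ to obtain a decomposition $e_i = x_i\,y_i\,y_i\,z_i$ with $[x_1 y_1 z_1, x_2 y_2 z_2]\in r$. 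Then, by~(\ref{eq:def-reg}), $[x_1 y_1^{*} z_1, x_2 y_2^{*} z_2]\in\reg{r}{r'}$, and $e_i = x_i\, y_i^{2}\, z_i$ is in its language; hence $r' \in \reg{r}{r'}$.

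For the inductive step, suppose $r \exec p^{n}\in\reg{r}{r'}$ and consider $r \exec p^{n+1} = (r \exec p^{n}) \exec p$ via~(\ref{eq:def-seq-comp}). The crux is a \emph{propagation} sub-lemma: one further application of the basic control block $p$ is compatible with ``pumping the repeated segment once more''. Concretely, I would show that if a relation $R$ is lasso-related to $R \exec p$ with some repeated segments, then $R \exec p$ is lasso-related to $(R \exec p) \exec p$ with the \emph{same} segments (possibly only up to the $\reg{}{}$ over-approximation); equivalently, the ``insert one more copy of $y_i$'' operation commutes, up to $\reg{}{}$, with $\exec p$. Granting this, an immediate induction gives a lasso between $r \exec p^{n-1}$ and $r \exec p^{n}$ for every $n\ge 1$ with segments consistent with the original lasso of $(r,r')$, so that every pair of $r \exec p^{n}$ has the shape $x_i\, y_i^{\,n+1}\, z_i$ for some original lasso pair $[x_1 y_1 z_1, x_2 y_2 z_2]\in r$ with $[x_1 y_1 y_1 z_1, x_2 y_2 y_2 z_2]\in r'$; such a pair is matched by the template $[x_1 y_1^{*} z_1, x_2 y_2^{*} z_2]\in\reg{r}{r'}$, which proves $r \exec p^{n}\in\reg{r}{r'}$. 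The sub-lemma itself is established by analysing the atomic instructions of $p$ in~(\ref{eq:BNF-basic-struct}): for $\create{x}$ and $\forget{x}$ the transformation is $r-x$, which simply deletes whole pairs; for $t\assgn s$ one tracks how $r[ot=t]$, $r-t$, $r\slash s$, the re-insertion of pairs, and dot-completion act on a pumped pair (none destroys regularity, by Remark~\ref{rm:op-preserve-reg}); sequencing is handled by a nested induction via~(\ref{eq:def-seq-comp}) and conditionals by~(\ref{eq:def-then-else}), since a finite union of $\reg{r}{r'}$-captured relations is again $\reg{r}{r'}$-captured.

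The main obstacle is exactly this propagation sub-lemma, specifically its atomic-assignment case: one must argue that the repeated segments $y_1,y_2$ exhibited by the lasso for the single transition $r \exec p = r'$ are \emph{stable} under subsequent applications of $p$, i.e.\ that each assignment in $p$ only rewrites prefixes (and, through dot-completion, matching suffixes) of alias expressions and never touches material inside the pumped part, so that the ``growth'' contributed by one iteration of $p$ is the same at every iteration. Making this bookkeeping precise --- keeping track of where a given $t\assgn s$ can introduce or erase a prefix, and of how the choice of decomposition interacts with dot-completeness --- is the technical heart of the argument; the saving grace is that $\reg{r}{r'}$ aggregates the templates arising from \emph{all} lasso decompositions of $(r,r')$, which leaves enough slack to land back inside $\reg{r}{r'}$ even when the witnessing decomposition shifts from one iteration to the next.
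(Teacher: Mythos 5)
Your overall scheme (induction on $n$, with the base case obtained directly from the lasso correspondence and the definition of $\reg{-}{-}$) matches the paper's, and your base case is fine. The problem is the inductive step: you reduce it to a ``propagation'' sub-lemma --- that the pumped segments $y_1,y_2$ exhibited by $\lasso{r}{r'}$ remain stable under further applications of $p$, i.e.\ that inserting one more copy of $y_i$ commutes (up to $\reg{r}{r'}$) with $\exec p$ --- and then you explicitly leave that sub-lemma unproved, calling it ``the technical heart of the argument.'' That is exactly the content of the lemma, so as it stands the proposal has a genuine gap rather than a proof. Moreover, the one concrete tool you invoke for the assignment case, Remark~\ref{rm:op-preserve-reg}, does not do the needed work: it only says that $r/s$, $r-x$, dot-completeness and $r[x=u]$ preserve \emph{regularity} (being a relation of regular expressions), which is a much weaker property than staying inside the specific language of templates $[x_1 y_1^* z_1, x_2 y_2^* z_2]$ generated by the lasso of $(r,r')$. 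Your intermediate claim that every pair of $r\exec p^n$ has the exact shape $x_i\,y_i^{\,n+1}\,z_i$ is also contingent on that same unproved stability claim (and is stronger than what is needed, since pairs may be created, left untouched, or deleted in later iterations).

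For comparison, the paper closes the inductive step differently: it argues by contradiction. Assuming $\overline{r}=r\exec p^{n}\in\reg{r}{r'}$ but $\overline{r}\exec p\notin\reg{r}{r'}$, it observes that for a basic control block as in~(\ref{eq:BNF-basic-struct}) the only instruction that can introduce a pair breaking the regular pattern is an assignment $t\assgn s$ inside $p$; since $n\ge 1$, that same assignment was already executed in an earlier iteration starting from a relation in $\reg{r}{r'}$, so either its contributed pairs already conformed to the pattern --- in which case dot-completeness guarantees the newly added pair $(t,s')$ conforms as well --- or any non-conforming pair is erased by a subsequent $\create{u}$/$\forget{u}$ in $p$, and the same erasure happens again now. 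Either way no violating pair survives, contradicting the assumption. If you want to salvage your route, you would have to actually prove your commutation sub-lemma by a case analysis of this kind on the atomic instructions, tracking membership in $\reg{r}{r'}$ itself (not mere regularity) through $r[ot=t]$, $r-t$, $r/s$ and dot-completion.
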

\begin{proof}
We proceed by induction on $n$.
\begin{itemize}
\item {\it Base case}: $n=1$. By hypothesis it holds that $\lasso{r}{r'} = true$. Hence, according to the definition of $\lasso{-}{-}$ in~(\ref{eq:def-lasso}), there exists a one-to-one correspondence of the shape
\[
[x_1 y_1 z_1, x_2 y_2 z_2] \in r \textnormal{~~iff~~} [x_1 y_1 y_1 z_1, x_2 y_2 y_2 z_2] \in r'
\]
between the elements of $r$ and $r'$, respectively.

Consequently, by the definition of $\reg{-}{-}$ in~(\ref{eq:def-reg}), it is easy to see that 
\[r' \in \reg{r}{r'}.\]

\item {\it Induction step.} Fix a natural number $n$ and suppose that
\begin{equation}
\label{eq:ind-step}
r \exec p^k \in \reg{r}{r'}
\end{equation}
for all $k \in \{1, \ldots, n\}$. We want to prove that~(\ref{eq:ind-step}) holds also for $k = n+1$.

We continue by ``reductio ad absurdum''. Consider 
\[
\overline{r} = r \exec p^n \in \reg{r}{r'},
\]
and assume that
\begin{equation}
\label{eq:red-absurdum}
\overline{r} \exec p \not \in \reg{r}{r'}
\end{equation}
Clearly, the execution of $p$ when starting with $\overline{r}$ identifies an alias pair which is not in $\reg{r}{r'}$. Given that $p$ is a basic control block as in~(\ref{eq:BNF-basic-struct}), and based on the corresponding definitions in~(\ref{eq:def-seq-comp})--(\ref{eq:def-assign}), it is not difficult to observe that the regular structure of the alias information can only be broken via a new added pair $(t, s')$ associated to an assignment $t \assgn s$ within $p$.

Let $p = C[t \assgn s]$, where $C$ is a context built according to~(\ref{eq:BNF-basic-struct}), and $t \assgn s$ is the upper-most assignment instruction in the syntactic tree associated to $p$, that introduces a pair $[t,s']$ which is not in $\reg{r}{r'}$. Assume that $\tilde{r}$ is the intermediate alias relation obtained by reducing $\overline{r} \exec C[t\assgn s]$ according to the equations~(\ref{eq:def-seq-comp})--(\ref{eq:def-assign}), before the application of the assignment axiom corresponding to $t\assgn s$.

Note that $t \assgn s$ was executed at least once before, as $n \geq 1$, and observe that $\tilde{r} \in \reg{r}{r'}$. Hence, we identify two situations in the context of the aforementioned execution: (a) either all the newly added pairs corresponding to the assignment $t \assgn s$ complied to the regular structure, or (b) each new pair $[t', s']$ that did not fit the regular pattern was later removed via a subsequent instruction ``$\create{u}$'' or ``$\forget{u}$'' within $p$, with $u$ a prefix of $t'$ or $s'$.

If the case (a) above was satisfied, then, based on the definition of dot-completeness, a pair
\[
(t,s') \in (\tilde{r_1} - t)[t = \tilde{r_1}\slash s - t] - ot,
\]
where 
\[
\tilde{r_1} = \tilde{r}[ot = t]
\]
cannot break the regular pattern of the alias expressions either.
For the case (b) above, all the ``non-well-behaved'' new pairs will be again removed via a subsequent ``$\create{u}$'' or ``$\forget{u}$'' within $p$.

Therefore, the assumption in~(\ref{eq:red-absurdum}) is false, so it holds that:
\[
\overline{r} \exec p = r \exec p^{n+1} \in \reg{r}{r'}.
\]
\end{itemize}
\end{proof}

\section{A {\K}-machinery for collecting aliases}
\label{sec:implem-k}

In this section we provide the specification of a RL-based mechanism collecting the alias information in the {\K} semantic framework~\cite{DBLP:journals/jlp/RosuS10}. We choose {\K} more as a notational convention to enable compact and modular definitions. In reality, the {\K}-rules in this section are implemented in Maude, as rewriting theories, on top of the formalization of SCOOP~\cite{DBLP:conf/acsd/MorandiSNM13} (we refer to Section~\ref{sec:alias-SCOOP} for more details on our approach).

In short, our strategy is to start with a program built on top of the control structures in~(\ref{eq:BNF-control-struct}), then to apply the corresponding {\K}-rules in order to get the ``may aliasing'' information in a designated {\K}-cell ($\cellS{-}{\textnormal{al}}$). Independently of the setting (sequential or concurrent) one can exploit this approach in order to evaluate the aliases of a given finite length $L$. We also show that for sequential contexts, the application of the {\K}-rules is finite and the aliases in the final configuration soundly over-approximate the (infinite) ``may alias'' relations of the calculus.

\begin{paragraph}{Brief overview of {\K}}{
{\K}~\cite{DBLP:journals/jlp/RosuS10} is an executable semantic framework based on Rewriting Logic~\cite{DBLP:conf/fct/MeseguerR11}. It is suitable for defining (concurrent) languages and corresponding formal analysis tools, with straightforward implementation in {\K}-Maude~\cite{DBLP:conf/wrla/SerbanutaR10}. {\K}-definitions make use of the so-called \emph{cells}, which are labelled and can be nested, and (rewriting) \emph{rules} describing the intended (operational) semantics.

A \emph{cell} is denoted by $\cellS{-}{\textnormal{[name]}}$, where [name] stands for the \emph{name of the cell}. A construction $\cellS{.}{\textnormal{n}}$ stands for an \emph{empty cell} named n. We use ``pattern matching'' and write $\cellSL{c}{\textnormal{n}}$ for a cell with content $c$ at the top, followed by an arbitrary content ($\ldots$). Orthogonally, we can utilize cells of shape $\cellSR{c}{\textnormal{n}}$ and $\cellS{\ldots c \ldots}{{\textnormal{n}}}$, defined in the obvious way.

Of particular interest is $\cellS{-}{\textnormal{k}}$ -- the \emph{continuation cell}, or the \emph{$k$-cell}, holding the stack of program instructions (associated to one processor), in the context of a programming language formalization. We write
\[
\cellSL{i_1  \curvearrowright i_2}{\textnormal{k}}
\]
for a set of instructions to be ``executed'', starting with instruction $i_1$, followed by $i_2$. The associative operation $ \curvearrowright$ is the instruction sequencing.

A {\K}-rewrite rule
\begin{equation}
\label{eq:def-k-rew-rule-ex}
\cellSL{c}{\textnormal{$n_1$}}
\cellS{c'}{\textnormal{$n_2$}}
~\Rightarrow~
\cellSL{c'}{\textnormal{$n_1$}}
\cellSR{c'}{\textnormal{$n_3$}}
\end{equation}
reads as: if cell \textnormal{${n_1}$} has $c$ at the top and cell \textnormal{${n_2}$} contains value $c'$, 
then $c$ is replaced by $c'$ in \textnormal{${n_1}$} and $c'$ is added at the end of the cell \textnormal{${n_3}$}. The content of \textnormal{${n_2}$} remains unchanged.
}
In short,~(\ref{eq:def-k-rew-rule-ex}) is written in a {\K}-like syntax as:
\[
\cellUL{c}{c'}{\textnormal{$n_1$}}
\cellS{c'}{\textnormal{$n_2$}}
\cellUR{.}{c'}{\textnormal{$n_3$}}.
\]
\end{paragraph}

We further provide the details behind the {\K}-specification of the alias calculus. As expected, the $k$-cell retains the instruction stack of the object-oriented program.
We utilize cells
$
\langle - \rangle_{\textnormal{al}}
$
to enclose the current alias information, and the so-called \emph{back-tracking cells} 
$
\langle - \rangle_{\textnormal{bkt-\ldots}}
$
enabling the sound computation of aliases for the case of \mbox{$\thenelse{\!\!-\!\!}{\!\!-\!\!}$} and, in non-concurrent contexts, for loops and (possibly recursive) function calls.
As a convention, we mark with ($\clubsuit$) the rules that are sound only for non-concurrent applications, based on Lemma~\ref{lm:reg-expr}.

The following {\K}-rules are straightforward, based on the axioms~(\ref{eq:def-seq-comp})--(\ref{eq:def-assign}) in Section~\ref{sec:ext-inf-expr}. 
Namely, the rule implementing an instruction $p \,;\, q$ simply forces the sequential execution of $p$ and $q$ by positioning $p \curvearrowright q$ at the top of the continuation cell:

\begin{equation}
\label{appeq:k-seq-comp}
\cellUL{p\,;\,q}{p \curvearrowright q }{k}
\end{equation}

Handling $\create x$ and $\forget x$ complies to the associated definitions. Namely, it updates the current alias relation by removing all the pairs having (at least) one element with $x$ as prefix. In addition, it also pops the corresponding instruction from the continuation stack:

\begin{equation}
\label{appeq:k-create-forget}
\begin{array}{lr}
\cellU{r}{r-x}{al}\!\!\cellUL{\create x}{.}{k} \hspace{15pt}
&
\cellU{r}{r-x}{al}\!\!\cellUL{\forget x}{.}{k}
\end{array}
\end{equation}

The assignment rule restores the current alias relation according to its axiom in~(\ref{eq:def-assign}), and removes the assignment instruction from the top of the $k$-cell:

\begin{equation}
\label{appeq:k-assign}
\cellU{r}{(r_{1} - t)[t\,=\,(r_1 \slash s \,-\,t)] - ot}{al}\cellUL{t \assgn s}{.}{k}\,\,\,\,\,
\textnormal{with~} r_{1} = r[ot = t]
\end{equation}

The {\K}-implementation of a $\thenelse{p}{q}$ statement is more sophisticated, as it instruments a stack-based mechanism enabling the computation of the union of alias relations $r \exec p \,\cup\, r\exec q$ in three steps. First, we define the {\K}-rule:
\begin{equation}
\label{appeq:k-then-else-first}
\cellS{r}{al}
\cellUL{\thenelse{p}{q}}{p \mfbox{et} q \mfbox{ee}}{k}
\cellUL{.}{\cellS{r,p}{t}\,\, \cellS{r,q}{e}}{bkt-te}
\end{equation}
saving at the top of the back-tracking stack $\langle - \rangle_{\textnormal{bkt-te}}$ the initial alias relation $r$ to be modified by both $p$ and $q$, via two cells $\langle r, p\rangle_{\textnormal{t}}$ and $\langle r, q\rangle_{\textnormal{e}}$, respectively. Note that the original instruction in the $k$-cell is replaced by a meta-construction marking the end of the executions corresponding to the {\bf then} and {\bf else} branches with $\mfbox{et}$ and $\mfbox{ee}$, respectively. 

Second, whenever the successful execution of $p$ (signaled by $\mfbox{et}$) at the top of the $k$-cell) builds an alias relation $r'$, the execution of $q$ starting with the original relation $r$ is forced by replacing $r'$ with $r$ in $\langle - \rangle_{\textnormal{al}}$, and by positioning $q \mfbox{ee}$ at the top of the $k$-cell. The new alias information after $p$, denoted by $\langle r', p\rangle_{\textnormal{t}}$, is updated in the back-tracking cell: 

\begin{equation}
\label{appeq:k-end-then}
\cellU{r'}{r}{al}
\cellUL{\mfbox{et} q \mfbox{ee}}{q \mfbox{ee}}{k}
{
\begin{tabular}{r@{}c@{}c@{}l}
\small
$\langle\,\,$ & $\cellS{r,p}{t}$ & $\cellS{r,q}{e}$ & $ \,\,\ldots\rangle_{\textnormal{bkt-te}}$\\
\cline{2-2}
& $\cellS{r',p}{t}$ & &\\
\end{tabular}
}
\end{equation}

Eventually, if the successful execution of $q$ (marked by $\mfbox{ee}$ at the top of $\langle-\rangle_{\textnormal k}$) produces an alias relation $r''$, then the final alias information becomes $r' \,\cup\, r''$, where $r'$ is the aliasing after $p$, stored as showed in~(\ref{appeq:k-end-then}). The corresponding back-tracking information is removed from  $\langle - \rangle_{\textnormal{bkt-te}}$, and the next program instruction is enabled in the $k$-cell:

\begin{equation}
\label{appeq:k-end-else}
\cellU{r''}{r'\,\cup\, r''}{al}
\cellUL{\mfbox{ee}}{.}{k}
\cellUL{\cellS{r',p}{t}\,\,\cellS{r,q}{e}}{.}{bkt-te}
\end{equation}

For $\loopend{p}$, we utilize a meta-construction $p \mfbox{l} \loopend{p}$ simulating the set union in~(\ref{eq:def-loop}), and a back-tracking stack $\langle - \rangle_{\textnormal{bkt-l}}$ collecting the alias information obtained after each execution of $p$. Moreover, the {\K}-implementation exploits the result in Lemma~\ref{lm:reg-expr}.
Whenever a ``lasso'' is reached, the infinite rewriting is prevented by resuming the infinite application of $p$ in terms of a sound over-approximating alias relation. The {\K}-rules are as follows.

First, the aforementioned unfolding is performed, and the alias relation before $p$ is stored in the back-tracking cell as $\langle r \rangle_{\textnormal{al-o}} \langle p \rangle_{\textnormal{l}}$:

\begin{equation}
\label{appeq:k-loop-whole}
\cellS{r}{al}
\cellUL{\loopend{p}}{p \mfbox{l} \loopend{p}}{k}
\cellUL{.}{\cellS{r}{al-o}\cellS{p}{l}}{bkt-l}
\end{equation}

If the alias relation $r'$ obtained after the successful execution of $p$ (marked by $\mfbox{l}$ at the top of the continuation) is not a lasso of the aliasing $r$ before $p$ (previously stored in $\langle -\rangle_{\textnormal{bkt-l}}$) then $p$ is constrained to a new execution by becoming the top of the $k$-cell, and $r'$ is memorized for back-tracking:

\begin{equation}
\label{appeq:k-loop-not-lasso}
\cellS{r'}{al}
\cellUL{\mfbox{l} \loopend{p}}{p \mfbox{l} \loopend{p}}{k}
\cellUL{\cellS{r}{al-o}\cellS{p}{l}}{\cellS{r'}{al-o}\cellS{p}{l}}{bkt-l}
\textnormal{~if not~} \lasso{r}{r'}~~(\clubsuit)
\end{equation}

Last, if a lasso is reached after the execution of $p$, then the current aliasing is soundly replaced by a ``regular'' over-approximation $\reg{r}{r'}$, the corresponding back-tracking information is removed from $\langle -\rangle_{\textnormal{bkt-l}}$ and the {\bf loop} instruction is eliminated from the $k$-cell:

\begin{equation}
\label{appeq:k-loop-lasso}
\cellU{r'}{\reg{r}{r'}}{al}\!\!
\cellUL{\mfbox{l} \loopend{p}}{.}{k}\!\!
\cellUL{\cellS{r}{al-o}\cellS{p}{l}}{.}{bkt-l}
\textnormal{~if~} \lasso{r}{r'}~~(\clubsuit)
\end{equation}

For handling function calls such as $\call{f(l)}$ we use a meta-construction $\mid f \mid \!\! \mfbox{f}$. Here $\mid f \mid$ stands for the body of $f$ and $\mfbox{f}$ marks the end of the corresponding execution. Moreover, a stack $\langle - \rangle_{\textnormal{bkt-cf}}$ is utilized in order to store the alias information before each (possibly recursive) call of $f$, with the purpose of identifying the  lassos generated by the (possibly repeated) execution of $f$.
In order to guarantee a sound implementation of (mutually) recursive calls,
both $\mfbox{f}$ and  $\langle - \rangle_{\textnormal{bkt-cf}}$ are parameterized by $f$ -- the name of the function.
An example illustrating this reasoning mechanism is provided in Section~\ref{sec:example-k-machinery}.

The first {\K} rule for handling function calls matches the associated axiom in~(\ref{eq:def-qualified-call}): the alias information is set to $r[f^{\bullet}:l]$, whereas the next instructions to be executed are given by $\mid f \mid$. Note that the original aliasing is retained in the (initially empty) back-tracking cell via $\langle r \rangle_{\textnormal{al-o}}$.

\begin{equation}
\label{appeq:k-call-first}
\cellU{r}{r[f^\bullet:l]}{al}
\cellUL{\call{f(l)}}{\mid f \mid \mfbox{f}}{k}
\cellU{.}{\cellS{r}{al-o}}{bkt-cf}
\end{equation}

\begin{remark}
\label{apprem:formal-vs-actual}
Observe that the back-tracking cell does not need to be parameterized by the actual argument list $l$ of $f$. Each such argument is anyways replaced in the current alias relation $r$ by its counterpart in the formal argument list of $f$. In short: $r$ becomes $r[f^\bullet : l]$.
\end{remark}

A successful execution of $\call{f(l)}$ is distinguished by the occurrence of $\mfbox{f}$ at the top of the continuation stack. If this is the case, then the corresponding back-tracking alias information is removed from $\langle - \rangle_{\textnormal{bkt-cf}}$ and the next program instruction (if any) is enabled at the top of the $k$-cell:

\begin{equation}
\label{appeq:k-call-exit}
\cellS{r'}{al}
\cellUL{\mfbox{f}}{.}{k}
\cellUL{\cellS{r}{al-o}}{.}{bkt-cf}
\end{equation}

Recursive calls are treated by means of two {\K}-rules. Note that a recursive context is identified whenever the current program instruction is of shape $\call{f(l)}$ and the associated back-tracking structure is not empty, {\it i.e.}, rule~(\ref{appeq:k-call-first}) was previously applied.
Then, if the recursive call of $f$ when starting with $r$ produces a lasso $r'$, the execution of $f(l)$ is stopped by soundly over-approximating the alias information with $\reg{r}{r'}$, according to Lemma~\ref{lm:reg-expr}, and by removing $\call{f(l)}$ from the $k$-cell:

\begin{equation}
\label{appeq:k-call-lasso}
\cellU{r'}{\reg{r}{r'}}{al}
\cellUL{\call{f(l)}}{.}{k}
\cellSL{\cellS{r}{al-o}}{bkt-cf}
\textnormal{~if~} \lasso{r}{r'}~~(\clubsuit)
\end{equation}

If a lasso is not reached, then the body of $f$ is executed once more, and the current aliasing is pushed to the back-tracking cell:

\begin{equation}
\label{appeq:k-call-not-lasso}
\cellS{r'}{al}
\cellUL{\call{f(l)}}{\mid f \mid \mfbox{f}}{k}
{
\begin{tabular}{r@{}c@{}c@{}l}
\small
$\langle\,\,$ & $.$ & $\cellS{r}{al-o}$ & $ \,\,\ldots\rangle_{\textnormal{bkt-cf}}$\\
\cline{2-2}
& $\cellS{r'}{al-o}$ & &\\
\end{tabular}
}
\textnormal{~if~not~} \lasso{r}{r'}~~(\clubsuit)
\end{equation}

Qualified calls $x.\call{f(l)}$ are handled by two {\K}-rules as follows. First, based on the definition in~(\ref{eq:def-qualified-call}), the ``negative variable'' $x'$ transposing the context of the call to to the context of the caller is distributed to the elements of the initial alias relation $r$, and to $l$ -- the argument list of $f$. Moreover, a meta-construction $\mfbox{qf}$ is utilized in order to mark the end of the qualified call in the continuation cell, similarly to the rule~(\ref{appeq:k-call-first}). The caller is stored in a back-tracking stack $\cellS{.}{\textnormal{bkt-qf}}$ also parameterized by $f$ -- the name of the function. The current instruction in the $k$-cell becomes $\call{f(x'.l)}$, as expected:

\begin{equation}
\label{appeq:k-qcall-first}
\cellU{r}{x'.r}{al}
\cellUL{x. \call{f(l)}}{\call{f(x'.l)} \mfbox{qf}}{k}
\cellU{.}{\cellS{x}{f}}{bkt-qf}
\end{equation}

Second, when the successful termination of the qualified call is signaled by $\mfbox{qf}$ at the top of the $k$-cell, the corresponding stored caller is distributed to the current alias relation and removed from the back-tracking cell. The next instruction in the continuation cell is released by eliminating the top $\mfbox{qf}$: 

\begin{equation}
\label{appeq:k-qcall-exit}
\cellU{r}{x.r}{al}
\cellUL{\mfbox{qf}}{.}{k}
\cellUL{\cellS{x}{f}}{.}{bkt-qf}
\end{equation}

In a non-concurrent setting, the machinery orchestrating the {\K}-rules introduced in this section implements an algorithm that always terminates and provides a sound over-approximation of ``may aliasing''.

\begin{theorem}
\label{th:dec-proc}
Consider $p$ a program built on top of the control structures in~(\ref{eq:BNF-control-struct}), that executes in a sequential setting. Then, the application of the corresponding {\K}-rules when starting with $p$ and an empty alias relation, is a finite rewriting of shape
\[
\cellS{\emptyset}{\textnormal{al}}\cellS{p}{\textnormal{k}}\,\,\xRightarrow{(*)}\,\,\cellS{r}{\textnormal{al}}\cellS{.}{\textnormal{k}},
\]
with $r$ a sound over-approximation of the aliasing information corresponding to the execution of $p$.
\end{theorem}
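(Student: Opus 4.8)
The plan is to prove the two assertions — termination of the {\K}-rewriting and soundness of the resulting relation $r$ — simultaneously, by structural induction on $p$, after strengthening the claim so the induction is self-supporting: for every program $p$ over the grammar~(\ref{eq:BNF-control-struct}) and every \emph{regular} alias relation $r_0$, the rewriting
\[
\cellS{r_0}{\textnormal{al}}\,\cellS{p}{\textnormal{k}}\,\,\xRightarrow{(*)}\,\,\cellS{r}{\textnormal{al}}\,\cellS{.}{\textnormal{k}}
\]
terminates, $r$ is regular, and $r_0\exec p\subseteq r$, where for the loop constructor $r_0\exec p$ abbreviates the union $\bigcup_{n}r_0\exec p^{n}$ of~(\ref{eq:def-loop}). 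The theorem is the instance $r_0=\emptyset$, which is regular; allowing an arbitrary regular $r_0$ is what makes the sequential-composition and loop-body cases go through.

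Two ingredients will be used everywhere. First, \emph{monotonicity}: each basic operation $r/s$, $r-x$, dot-completeness, $r[x=u]$ and $r[u:v]$ — hence $r\exec q$ for every instruction block $q$ — is monotone in its relation argument; this is a routine check, since deleting pairs and taking closures are monotone, and it is precisely what propagates soundness across sequencing, an over-approximation produced after $p$ remaining one once it is fed to the next instruction. Second, \emph{regularity preservation}, namely Lemma~\ref{lm:sequential-regularity} with Remark~\ref{rm:op-preserve-reg}, which guarantees that the relation handed to each recursive use of the induction hypothesis is again regular.

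With these in hand, the non-loop cases are routine. For $\create{x}$, $\forget{x}$ and $t\assgn s$ the matching rules~(\ref{appeq:k-create-forget}),~(\ref{appeq:k-assign}) fire once, pop the instruction, and compute precisely $r_0\exec(\cdot)$ according to~(\ref{eq:def-create-delete}),~(\ref{eq:def-assign}), so the result is exact and regular by Remark~\ref{rm:op-preserve-reg}. For $p\,;\,q$, rule~(\ref{appeq:k-seq-comp}) reduces to $p\curvearrowright q$, and the induction hypothesis on $p$ from $r_0$, then on $q$ from the resulting regular $\bar r$, with monotonicity and~(\ref{eq:def-seq-comp}) giving $\bar r\exec q\supseteq(r_0\exec p)\exec q=r_0\exec(p\,;\,q)$, closes the case. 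For $\thenelse{p}{q}$, rules~(\ref{appeq:k-then-else-first})--(\ref{appeq:k-end-else}) run the machinery on $p$ and then on $q$, both from $r_0$, and return the union of the two regular over-approximations, which over-approximates $r_0\exec p\cup r_0\exec q$ by~(\ref{eq:def-then-else}). Non-recursive calls $\call{f(l)}$ and $x.\call{f(l)}$ reduce, by~(\ref{appeq:k-call-first}),~(\ref{appeq:k-call-exit}),~(\ref{appeq:k-qcall-first}),~(\ref{appeq:k-qcall-exit}) and the axioms~(\ref{eq:def-qualified-call}), to the body of $f$ with the appropriate actual-argument and negative-variable substitutions; since $f$ is non-recursive this unfolding is finite (as in the proof of Lemma~\ref{lm:reg-fin-exec}), so the induction hypothesis applies on a suitable well-founded measure. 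Recursive calls are handled exactly like loops, via Remark~\ref{rm:rec-calls-loop}.

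This leaves $\loopend{p}$, which is the core of the argument and whose termination part I expect to be the main obstacle. Rules~(\ref{appeq:k-loop-whole}),~(\ref{appeq:k-loop-not-lasso}),~(\ref{appeq:k-loop-lasso}) generate a sequence $r_0,r_1,r_2,\dots$ in which $r_{j+1}$ is the (terminating and regular, by the induction hypothesis) result of the machinery on the body from $r_j$, and they halt at the first $j$ with $\lasso{r_{j-1}}{r_j}$, emitting $\reg{r_{j-1}}{r_j}$. Granting such a $j$, soundness follows: Lemma~\ref{lm:reg-expr} applied with starting relation $r_{j-1}$ gives $r_{j-1}\exec p^{n}\subseteq\reg{r_{j-1}}{r_j}$ for all $n\geq1$, which together with $r_{j-1}\supseteq r_0\exec p^{\,j-1}$ (induction hypothesis plus monotonicity) covers every iterate from index $j-1$ onward; the finitely many earlier iterates $r_0\exec p^{0},\dots,r_0\exec p^{\,j-2}$ must also be folded in — either by having the {\K}-machinery accumulate the intermediate relations, or by checking that $\reg{r_{j-1}}{r_j}$ already subsumes them — after which $\bigcup_{n}r_0\exec p^{n}=r_0\exec\loopend{p}\subseteq\reg{r_{j-1}}{r_j}$, the output being regular because each starred factor $y_i^{*}$ is a regular expression. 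The delicate point is the \emph{existence} of such a $j$: the pumping-lemma intuition is that iterating a fixed basic control block grows alias expressions only by splicing in repeated segments, so that modulo those pumpable segments an alias pair ranges over a finite set determined by the variables, attributes and bounded-length ``skeletons'' of $p$; a pigeonhole argument then forces two consecutive iterates to coincide up to pumping, which is precisely the lasso condition. Turning this into a proof — isolating the right ``skeleton-modulo-pumping'' invariant, showing it is preserved by one iteration and takes only finitely many values, and lifting to arbitrary loop-nesting depth by induction as in Lemma~\ref{lm:reg-infin-exec} — is the principal difficulty, and where I would concentrate the effort.
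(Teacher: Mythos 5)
Your plan follows essentially the same route as the paper's proof, which is only a few sentences long: termination is reduced to the claim that repeated execution of a loop body (or of a recursive call, via Remark~\ref{rm:rec-calls-loop}) must eventually produce a lasso, at which point rules~(\ref{appeq:k-loop-lasso}) and~(\ref{appeq:k-call-lasso}) remove the offending construct from the $k$-cell, and soundness of the emitted relation is delegated entirely to Lemma~\ref{lm:reg-expr}. The step you single out as the principal remaining obstacle --- actually proving that some pair of consecutive iterates satisfies $\lasso{-}{-}$ --- is exactly the point the paper does not argue in any more detail than you do: it simply asserts, as the ``key observation'', that alias expressions can only grow in a regular fashion (appealing to the development around Lemma~\ref{lm:sequential-regularity}) and concludes that ``a lasso is always reached''. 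So you are not missing an idea that the paper supplies; your pigeonhole sketch over a ``skeleton-modulo-pumping'' invariant is a reasonable way to make that assertion rigorous, and carrying it out would in fact strengthen the paper. Your additional scaffolding is also more explicit than anything in the paper: the strengthened induction hypothesis over arbitrary regular initial relations, the monotonicity of the basic operations used to propagate over-approximations across sequencing, and the well-founded measure for non-recursive call unfolding are all left implicit there. Finally, your observation that the iterates preceding the lasso --- in particular the zero-iteration relation $r_0$ and the intermediate $r_1,\dots,r_{j-2}$ --- must be shown to be subsumed by $\reg{r_{j-1}}{r_j}$ (or else accumulated by the machinery) touches a genuine subtlety: rule~(\ref{appeq:k-loop-lasso}) replaces the current aliasing by $\reg{r}{r'}$ outright, and neither the rule nor the paper's proof addresses why no pair from the earlier iterates is thereby lost, so this point deserves the attention you give it rather than the silence it receives in the paper.
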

\begin{proof}
The key observation is that, due to the execution of loops and/or recursive calls, expressions can infinitely grow in a \emph{regular} fashion. Hence, a lasso is always reached. Consequently, the control structure generating the infinite behaviour is removed from the $k$-cell, according to the associated {\K}-specification for loops and/or recursive calls. This guarantees termination. Moreover, recall that the regular expressions replacing the current alias information are a sound over-approximation, according to Lemma~\ref{lm:reg-expr}.
\end{proof}

Observe that the $RL$-based machinery can simulate precisely the ``cutting at length $L$'' technique in~\cite{Meyer-aliasing-13}. It suffices to disable the rules ($\clubsuit$) and stop the rewriting after L steps.

The naturalness of applying the resulted aliasing framework is illustrated in the example in Section~\ref{sec:example-k-machinery}, for the case of two mutually recursive functions.

\subsection{The {\K}-machinery by example}
\label{sec:example-k-machinery}

For an example, in this section we show how the {\K}-machinery developed in Section~\ref{sec:implem-k} can be used in order to extract the alias information for the case of two mutually recursive functions defined as:
\[
\begin{array}{lr}
\begin{array}{l}
f(x) \,\, \{
\,\, x \assgn x.a\,; 
~~\call{g(x)} \,\, \}
\end{array}
&\hspace{40pt}
\begin{array}{l}
g(x) \,\, \{
\,\, x \assgn x.b\,;
~~\call{f(x) \,\, \}}
\end{array}
\end{array}
\]
We assume that $x$ is an object of a class with two fields $a$ and $b$, respectively. We consider a sequential setting.

At first glance it is easy to see that the execution of $\call{f(x)}$, when starting with an empty alias relation $r$, produces the alias expressions:
\begin{equation}
\label{eq:ex-mut-rec-alias}
[x,\,x.(a.b)^*]~~~[x.a,\,x.(a.b)^*.a]~~~[x.b,\,x.(a.b)^*.b]
\end{equation}

The associated reasoning in {\K} is depicted in Figure~\ref{K-machinery-example}. The whole procedure starts with an empty alias relation $r = \emptyset$, and $\call{f(x)}$ in the continuation stack. Then, the corresponding {\K} rules (for handling assignments and function calls) are applied in the natural way.

A lasso is reached after two calls of $f(x)$ that, consequently, determine two calls of $g(x)$ -- identified by $\mfbox{g} \mfbox{f} \mfbox{g} \mfbox{f}$ in the $k$-cell. This triggers the application of rule~(\ref{appeq:k-call-lasso})  enabling the ``regular'' over-approximation as in Lemma~\ref{lm:reg-expr}.

Our example also illustrates the importance of isolating the back-traced alias information in cells of shape $\cellS{.}{\textnormal{bkt-cf}}$ parameterized by the (possibly recursive) function $f$. More explicitly, rule~(\ref{appeq:k-call-lasso}) is soundly applied by identifying the aforementioned lasso based on: the current alias relation $r_4$, the recursive call $f(l)$ at the top of the continuation, and the back-traced aliasing $\cellSL{\cellS{r_2}{\textnormal{al-o}}}{\textnormal{bkt-cf}}$ associated to the previous executions of $f(l)$.

\renewcommand{\arraystretch}{1.5}
\setlength{\tabcolsep}{2pt}

As introduced in~(\ref{eq:def-lasso}), an alias relation $r'$ is a lasso of a relation $r$ whenever there is a one-to-one correspondence between their elements as follows:
\[
[x_1 y_1 z_1, x_2 y_2 z_2] \in r \textnormal{~~iff~~} [x_1 y_1 y_1 z_1, x_2 y_2 y_2 z_2] \in r'.
\]
The current alias relation
\[
r_4 = \{[x, x.a.b.a.b],\, [x.a, x.a.b.a.b.a],\, [x.b, x.a.b.a.b.b]\},
\]
before applying rule~(\ref{appeq:k-call-lasso}), is a lasso of
\[
r_2 = \{[x, x.a.b],\, [x.a, x.a.b.a],\, [x.b, x.a.b.b]\}.
\]
The aforementioned one-to-one correspondence is summarized in the following table:
\[
\footnotesize
\begin{tabular}{r@{}c@{}l|c|c|c|c|c|c}
$[x_1 y_1 z_1, x_2 y_2 z_2] \in r_2$& {~iff~} & $ [x_1 y_1 y_1 z_1, x_2 y_2 y_2 z_2] \in r_4$ & $x_1$ & $y_1$ & $z_1$ & $x_2$ & $y_2$ & $z_2$\\
\hline
$[x, x.a.b] \in r_2$ & {~iff~} & $[x, x.a.b.a.b] \in r_4$ & $x$ & $\varepsilon$ & $\varepsilon$ & $x$ & $a.b$ & $\varepsilon$\\
\hline
$[x.a, x.a.b.a] \in r_2$ & {~iff~} & $[x.a, x.a.b.a.b.a] \in r_4$ & $x$ & $\varepsilon$ & $a$ & $x$ & $a.b$ & $a$\\
\hline
$[x.b, x.a.b.b] \in r_2$ & {~iff~} & $[x.b, x.a.b.a.b.b] \in r_4$ & $x$ & $\varepsilon$ & $b$ & $x$ & $a.b$ & $b$
\end{tabular}
\]
Here $\varepsilon$ stands for the \emph{empty alias expression}.

Moreover, according to rule~(\ref{appeq:k-call-lasso}), the lasso shaped by $r_2$ and $r_4$ also causes the (otherwise infinite) recursive calls to stop, as $\call{f(l)}$ is eliminated from the top of the $k$-cell.
Hence, the rewriting process finishes with
a sound over-approximation $\reg{r_2}{r_4}$ replacing the current alias relation (cf. Lemma~\ref{lm:reg-expr}), defined precisely as in~(\ref{eq:ex-mut-rec-alias}). 

\renewcommand{\arraystretch}{0.5}

\begin{figure}[!htbp]
\label{K-machinery-example}
\[
\begin{tabular}{rl}
\label{fig:ex-mut-rec-k}
$\cellS{r}{\textnormal{al}}$ & $\cellS{\call{f(x)}}{\textnormal{k}}$\\
$\cellS{.}{\textnormal{bkt-cf}}$ & $\cellS{.}{\textnormal{bkt-cg}}$
\\[1.5ex]
\multicolumn{2}{c}{$\Downarrow$~(\ref{appeq:k-call-first})}\\[1.5ex]
$\cellS{r}{\textnormal{al}}$ & $\cellS{x \assgn x.a;\, \call{g(x)} \mfbox{f}}{\textnormal{k}}$\\
$\cellS{\cellS{r}{\textnormal{al-o}}}{\textnormal{bkt-cf}}$ & $\cellS{.}{\textnormal{bkt-cg}}$
\\[1.5ex]
\multicolumn{2}{c}{$\Downarrow$~(\ref{appeq:k-assign})}\\[1.5ex]
$\cellS{r_1}{\textnormal{al}}$ & $\cellS{\call{g(x)} \mfbox{f}}{\textnormal{k}}$\\
$\cellS{\cellS{r}{\textnormal{al-o}}}{\textnormal{bkt-cf}}$ & $\cellS{.}{\textnormal{bkt-cg}}$\\
\multicolumn{2}{c}{where $r_1 = \{[x, x.a],\, [x.a, x.a.a],\, [x.b, x.a.b]\}$}
\\[1.5ex]
\multicolumn{2}{c}{$\Downarrow$~(\ref{appeq:k-call-not-lasso})}\\[1.5ex]
$\cellS{r_1}{\textnormal{al}}$ & $\cellS{x \assgn x.b;\, \call{f(x)} \mfbox{g} \mfbox{f}}{\textnormal{k}}$\\
$\cellS{\cellS{r}{\textnormal{al-o}}}{\textnormal{bkt-cf}}$ & $\cellS{\cellS{r_1}{\textnormal{al-o}}}{\textnormal{bkt-cg}}$
\\[1.5ex]
\multicolumn{2}{c}{$\Downarrow$~(\ref{appeq:k-assign})}\\[1.5ex]
$\cellS{r_2}{\textnormal{al}}$ & $\cellS{\call{f(x)} \mfbox{g} \mfbox{f}}{\textnormal{k}}$\\
$\cellS{\cellS{r}{\textnormal{al-o}}}{\textnormal{bkt-cf}}$ & $\cellS{\cellS{r_1}{\textnormal{al-o}}}{\textnormal{bkt-cg}}$\\
\multicolumn{2}{c}{where $r_2 = \{[x, x.a.b],\, [x.a, x.a.b.a],\, [x.b, x.a.b.b]\}$}
\\[1.5ex]
\multicolumn{2}{c}{$\Downarrow$~(\ref{appeq:k-call-not-lasso})}\\[1.5ex]
$\cellS{r_2}{\textnormal{al}}$ & $\cellS{ x \assgn x.a;\,\call{g(x)} \mfbox{f} \mfbox{g} \mfbox{f}}{\textnormal{k}}$\\
$\cellS{\cellS{r_2}{\textnormal{al-o}}\,\, \cellS{r}{\textnormal{al-o}}}{\textnormal{bkt-cf}}$ & $\cellS{\cellS{r_1}{\textnormal{al-o}}}{\textnormal{bkt-cg}}$
\\[1.5ex]
\multicolumn{2}{c}{$\Downarrow$~(\ref{appeq:k-assign})}\\[1.5ex]
$\cellS{r_3}{\textnormal{al}}$ & $\cellS{ \call{g(x)} \mfbox{f} \mfbox{g} \mfbox{f}}{\textnormal{k}}$\\
$\cellS{\cellS{r_2}{\textnormal{al-o}}\,\, \cellS{r}{\textnormal{al-o}}}{\textnormal{bkt-cf}}$ & $\cellS{\cellS{r_1}{\textnormal{al-o}}}{\textnormal{bkt-cg}}$\\
\multicolumn{2}{c}{where $r_3 = \{[x, x.a.b.a],\, [x.a, x.a.b.a.a],\, [x.b, x.a.b.a.b]\}$}
\\[1.5ex]
\multicolumn{2}{c}{$\Downarrow$~(\ref{appeq:k-call-not-lasso})}\\[1.5ex]
$\cellS{r_3}{\textnormal{al}}$ & $\cellS{ x\assgn x.b;\,\call{f(x)} \mfbox{g} \mfbox{f} \mfbox{g} \mfbox{f}}{\textnormal{k}}$\\
$\cellS{\cellS{r_2}{\textnormal{al-o}}\,\, \cellS{r}{\textnormal{al-o}}}{\textnormal{bkt-cf}}$ & $\cellS{\cellS{r_3}{\textnormal{al-o}}\,\ \cellS{r_1}{\textnormal{al-o}}}{\textnormal{bkt-cg}}$
\\[1.5ex]
\multicolumn{2}{c}{$\Downarrow$~(\ref{appeq:k-assign})}\\[1.5ex]
$\cellS{r_4}{\textnormal{al}}$ & $\cellS{ \call{f(x)} \mfbox{g} \mfbox{f} \mfbox{g} \mfbox{f}}{\textnormal{k}}$\\
$\cellS{\cellS{r_2}{\textnormal{al-o}}\,\, \cellS{r}{\textnormal{al-o}}}{\textnormal{bkt-cf}}$ & $\cellS{\cellS{r_3}{\textnormal{al-o}}\,\ \cellS{r_1}{\textnormal{al-o}}}{\textnormal{bkt-cg}}$\\
\multicolumn{2}{c}{where $r_4 = \{[x, x.a.b.a.b],\, [x.a, x.a.b.a.b.a],\, [x.b, x.a.b.a.b.b]\}$}
\\[1.5ex]
\multicolumn{2}{c}{$\Downarrow$~(\ref{appeq:k-call-lasso})}\\[1.5ex]
$\cellS{reg(r_2, r_4)}{\textnormal{al}}$ & $\cellS{ \mfbox{g} \mfbox{f} \mfbox{g} \mfbox{f}}{\textnormal{k}}$\\
$\cellS{\cellS{r_2}{\textnormal{al-o}}\,\, \cellS{r}{\textnormal{al-o}}}{\textnormal{bkt-cf}}$ & $\cellS{\cellS{r_3}{\textnormal{al-o}}\,\ \cellS{r_1}{\textnormal{al-o}}}{\textnormal{bkt-cg}}$\\
\\[1.5ex]
\multicolumn{2}{c}{$\Downarrow$~(*)(\ref{appeq:k-call-exit})}\\[1.5ex]
\multicolumn{2}{c}{
$\cellS{\{[x, x.(a.b)^*],\,[x.a, x.(a.b)^*.a],\, [x.b, x.(a.b)^*.b]  \}}{\textnormal{al}} \cellS{.}{\textnormal{k}}
\cellS{.}{\textnormal{bkt-cf}} \cellS{.}{\textnormal{bkt-cg}}$}
\end{tabular}
\]
\caption{Aliasing and mutual recursion in {\K}.}
\end{figure}

\section{Aliasing in SCOOP}
\label{sec:alias-SCOOP}

In this section we provide a brief overview on the integration and applicability of the alias calculus in SCOOP.
First, recall from Section~\ref{sec:SCOOP} that the Maude semantics of SCOOP in~\cite{DBLP:conf/acsd/MorandiSNM13} is defined over tuples of shape   
\[
\langle p_1 \,::\, St_{1} \mid \ldots \mid p_n \,::\, St_{n}, \sigma \rangle
\]
where, $p_i$ and $St_i$ stand for processors and their call stacks, respectively. $\sigma$ is the state of the system and it holds information about the {heap} and the {store}.

The assignment instruction, for instance, is formally specified as the transition rule:
\begin{equation}
\label{eq:assign-Morandi}
\dfrac{\textnormal{a is fresh}}{
\Gamma \vdash \langle p \,::\, t\,:=s;\, St,\, \sigma \rangle \rightarrow
\langle p\,::\, \textnormal{eval}(a, s);\, \textnormal{wait}(a);\, \textnormal{write}(t, a.data);\,St,\, \sigma\rangle
}
\end{equation}
where, intuitively, ``eval$(a, s)$'' evaluates $s$ and puts the result on channel $a$, ``wait$(a)$'' enables processor $p$ to use the evaluation result, ``write$(t, a.data)$'' sets the value of $t$ to $a.data$, $St$ is a call stack, and $\Gamma$ is a typing environment~\cite{nienaltowski2007practical} containing the class hierarchy of a program and all the type definitions.

At this point it is easy to understand that the {\K}-rule for assignments
\[
\cellU{r}{(r_{1} - t)[t\,=\,(r_1 \slash s \,-\,t)] - ot}{al}\cellUL{t \assgn s}{.}{k}\,\,\,\,\,
\textnormal{with~} r_{1} = r[ot = t]~~~~(\ref{appeq:k-assign})
\]
can be straightforwardly integrated in~(\ref{eq:assign-Morandi}) by enriching the SCOOP configuration with a new component $alias\_$ encapsulating the alias information, and considering instead the transition:
\[
\begin{array}{cc}
\Gamma \vdash \langle p \,::\,\, t\, {:=} s;\, St,\, \sigma, alias_{old} \rangle \rightarrow\\
\langle p\,::\, \textnormal{eval}(a, s);\, \textnormal{wait}(a);\, \textnormal{write}(t, a.data);\,St,\, \sigma, alias_{new}\rangle
\end{array}
\]
where 
\[
\begin{array}{lr}
alias_{old} = r ~~~~~~~~~
alias_{new} =(r_{1} - t)[t\,=\,(r_1 \slash s \,-\,t)] - ot 
\end{array}
\]
with $r$ and $r_1$ as in~(\ref{appeq:k-assign}).
The integration of all the {\K}-rules of the alias calculus on top of the Maude formalization of SCOOP is achieved by following a similar approach.

For a case study, one can download the SCOOP formalization at:\\
{\verb+https://dl.dropboxusercontent.com/u/1356725/SCOOP-SCP.zip+}\\
and run the command\\
{ \verb+> maude SCOOP.maude ..\examples\linked_list-test.maude+}\\
corresponding to the execution of the code in~(\ref{eq:intor-ex})
\begin{equation}
\notag
\begin{array}{l}
\create{x_0}\\
\loopend{
\\\hspace{10pt}i\,:=i+1
\\\hspace{10pt}\create{x_i}
\\\hspace{10pt}x_i . set\_next(x_{i-1})\\\hspace{-4.5pt}
}
\end{array}
\end{equation}
for five iterations of the loop.
As can be observed based on the code in {\verb+aliasing-linked_list.maude+}, in order to implement our applications in Maude, we use intermediate (still intuitive) representations.
For instance, the class structure defining a node in a simple linked list, with filed \emph{next} and setter $set\_next$ is declared as:
{
\begin{verbatim}
class 'NODE
    create {'make} (
    attribute { 'ANY } 'next : [?, . , 'NODE] ;
    procedure { 'ANY } 'set_next ( 'a_next : [?, ., 'NODE] ;) [...]
   )
end ;
\end{verbatim}
}
\noindent
where {\verb+'next : [?, . , 'NODE]+} stands for an object of type {\verb+NODE+}, that is handled by the current processor ({\verb+.+}) and that can be Void ({{\verb+?+}), and {\verb+'make+} plays the role of a constructor.
The ``entry point'' of the program corresponds to the function {\verb+'make+} in the (main) class {\verb+'LINKED_LIST_TEST+} and is set via:
{
\begin{verbatim}
settings('LINKED_LIST_TEST, 'make, aliasing-on) .
\end{verbatim}
}
\noindent
Observe that the flag for performing the alias analysis is switched to ``on''.

The relevant parts of the corresponding Maude output after executing the aforementioned command are as follows:
{
\begin{verbatim}
                     \||||||||||||||||||/
                   --- Welcome to Maude ---
                     /||||||||||||||||||\
            Maude 2.6 built: Mar 31 2011 23:36:02
   
rewrite in SYSTEM :
[...] settings('LINKED_LIST_TEST, 'make, aliasing-on))

|-
  {0}proc(0) :: nil | {0}proc(1) :: nil,
  {['x0 ; 'x0]} U {['x0 ; 'x1.'next]} U
  {['x0 ; 'x2.'next.'next]} U {['x0 ; 'x3.'next.'next.'next]} U
  {['x0 ; 'x4.'next.'next.'next.'next]} U
  [...]
  {['x3 ; 'x3]} U {['x3 ; 'x4.'next]}

state
  heap [...]
  store [...]
end
\end{verbatim}
}

In short, one can see that two processors that were created finished executing the instructions on their corresponding call stacks:
{\verb+{0}proc(0) :: nil+} and {\verb+{0}proc(1) :: nil+}. The aliased expressions include, as expected based on~(\ref{eq:aliasing-example}), pairs of shape $[x_i\, ;\, x_{i+k}.next^k]$. Moreover, the output displays the contents of the current system state, by providing information on the \emph{heap} and \emph{store}, as formalized in~\cite{DBLP:conf/acsd/MorandiSNM13}.

\section{Deadlocking in SCOOP}
\label{sec:coffman-deadlocks}

In what follows we provide the details behind the formalization and the implementation of a deadlock detection mechanism for SCOOP. We discuss how Maude can be exploited in order to test and, respectively, model-check deadlocks in the context of SCOOP programs, we analyze the associated challenges and propose a series of corresponding solutions.

\subsection{Formalizing deadlocks in SCOOP}
\label{sec:formalizing-deadlocks}

Recall that the key idea of SCOOP is to associate to each object declared as \emph{separate} a processor, or handler in charge of executing the routines of that object. Assume a processor $p$ that performs a call $o.f(a_1, a_2, \ldots)$ on a separate object $o$, with separate arguments $a_i$ ($i \geq 1$). As previously stated in Section~\ref{sec:SCOOP}, in the SCOOP semantics, the application of the call  $f(\ldots)$ will \emph{wait} until it has been able to \emph{lock} all the separate objects associated to $a_1, a_2, \ldots$. This reservation mechanism enables deadlocks to occur whenever a set of processors reserve each other circularly. This situation might happen, for instance, in a Dining Philosophers scenario, where both philosophers and forks are objects residing on their own processors.

\begin{definition}[Deadlock]
\label{def:deadlock}
For a processor $p$, let $W(p)$ be the set of handlers $p$ \emph{waits} for its asynchronous execution, and $H(p)$ represent the set of resources $p$ already acquired.
A \emph{deadlock} exists if for some set $D$ of processors the following holds:
\begin{equation}
\label{eq:deadlock-def-Meyer}
(\forall p \in D).(\exists p' \in D).(p \not = p') \land (W(p) \cap H(p') = \emptyset).
\end{equation}
\end{definition}

The integration of a deadlock detection mechanism based on Definition~\ref{def:deadlock} on top of the SCOOP formalization in~\cite{DBLP:conf/acsd/MorandiSNM13} is immediate. As already presented in Section~\ref{sec:SCOOP}, the operational semantics of SCOOP is defined over tuples of shape:   
\[
\langle p_1 \,::\, St_{1} \mid \ldots \mid p_n \,::\, St_{n}, \sigma \rangle
\]
where, $p_i$ and $St_i$ stand for processors and their call stacks, respectively, and $\sigma$ is the state of the system.
Given a processor $p'$ as in~(\ref{eq:deadlock-def-Meyer}), the set $H(p')$ corresponds, based on~\cite{DBLP:conf/acsd/MorandiSNM13}, to $\sigma.rq\_locks(p')$.
Whenever the top of the instruction stack of a processor $p$ is of shape $lock(\{q_i,\ldots,q_n\})$, we say that the wait set $W(p)$ is the set of processors $\{q_1, \ldots, q_n\}$.
Hence, assuming a predefined system configuration $\langle deadlock \rangle$, the SCOOP transition rule in Maude corresponding to~(\ref{eq:deadlock-def-Meyer}) can be written as:
\begin{equation}
\label{eq:Maude-deadlock}
\dfrac{
\begin{array}{c}
(\exists D \subseteq \sigma.procs).(\forall p \in D). (\exists p' \in D).(p\not = p') \land\\
(aqs\,:=\ldots \mid p\,::\,lock(\{q_i,\ldots\});St \mid \ldots)\,\,\land\,\,
(\sigma.rq\_locks(p').has(q_i))
\end{array}
}
{
\langle aqs, \sigma\rangle \rightarrow \langle deadlock \rangle
} 
\end{equation}
It is intuitive to guess that $\sigma.procs$ in~(\ref{eq:Maude-deadlock}) returns the set of processors in the system, whereas $aqs$ stands for the list of these processors and their associated instruction stacks (separated by the associative and commutative operator ``$\mid$''~). We use ``$\ldots$'' to represent an arbitrary sequence of processors and processor stacks.

\subsection{Testing deadlocks}
\label{sec:deadlocks-Maude-test}

We implemented~(\ref{eq:Maude-deadlock}) and tested the deadlock detection mechanism on top of the formalization in~\cite{DBLP:conf/acsd/MorandiSNM13} for the Dining Philosophers problem.
A case study considering two philosophers sharing two forks can be run by downloading the SCOOP formalization at:\\
{\verb+https://dl.dropboxusercontent.com/u/1356725/SCOOP-SCP-deadlock.zip+}\\
and executing the command\\
{\verb+> maude SCOOP.maude ..\examples\dining-philosophers-rewrite.maude+}\\
\noindent
The class implementing the \emph{philosopher} concept is briefly given below:
{
{
\begin{verbatim}
(class 'PHILOSOPHER 
    create { 'make } (
        attribute {'ANY} 'left : [!,T,'FORK] ;
        attribute {'ANY} 'right : [!,T,'FORK] ;
        
        procedure { 'ANY } 'make ( 'fl : [!,T,'FORK] ; 
                                   'fr : [!,T,'FORK] ; ) 
            do  ( assign ('left, 'fl) ; assign ('right, 'fr) ; )
            [...]
        end ;

        procedure { 'ANY } 'eat_wrong (nil) 
            do ( command ('Current . 'pick_in_turn('left ;)) ; )
            [...]
        end ;

        procedure { 'ANY } 'pick_in_turn ('f : [!,T,'FORK] ; ) 
            do ( command ('Current . 'pick_two('f ; 'right ;)) ; )
            [...]
        end ;

        procedure { 'ANY } 'pick_two ('fa : [!,T,'FORK] ;
                                      'fb : [!,T,'FORK] ; ) 
            do
                (
                command ('fa . 'use(nil)) ;
                command ('fb . 'use(nil)) ;
                )
           [...]
        end ;
[...] end)
\end{verbatim}
}
}
\noindent
It declares two forks -- {\verb+'left+} and {\verb+'right+} of type {\verb+'FORK+}, that can be handled by any processor ({\verb+T+}) and that cannot be {\verb+Void+} ({\verb+!+}).
Assume two philosophers {\verb+p1+} and {\verb+p2+} (of \emph{separate} type {\verb+PHILOSOPHER+}) and two forks {\verb+f1+} and {\verb+f2+} (of \emph{separate} type {\verb+FORK+}). Moreover, assume that {\verb+'left+} and {\verb+'right+} for  {\verb+p1+} correspond to  {\verb+'f1+} and {\verb+'f2+}. For the case of {\verb+p2+} they correspond to  {\verb+'f2+} and {\verb+'f1+}, respectively.
Asynchronously,  {\verb+p1+} and {\verb+p2+} can execute {\verb+eat_wrong+}, which calls {\verb+pick_in_turn(left)+}. In the context of {\verb+p1+}, the actual value of {\verb+left+} is {\verb+f1+}, whereas for {\verb+p2+} is {\verb+f2+}. Consequently, both resources {\verb+f1+} and {\verb+f2+}, respectively, may be locked ``at the same time'' by {\verb+p1+} and {\verb+p2+}, respectively. Note that {\verb+pick_in_turn+} subsequently calls {\verb+pick_two+} that, intuitively, should enable the philosophers to use both forks. Thus, if {\verb+f1+} and {\verb+f2+}, respectively, are locked by {\verb+p1+} and {\verb+p2+}, respectively, the calls {\verb+pick_two(f2, f1)+} and {\verb+pick_two(f1, f2)+} corresponding to {\verb+p1+} and {\verb+p2+} will (circularly) wait for each other to finish. According to the SCOOP semantics,
{\verb+pick_two(f1, f2)+} is waiting for {\verb+p2+} to release {\verb+f2+}, whereas
{\verb+pick_two(f2, f1)+} is waiting for {\verb+p1+} to release {\verb+f1+}, as the forks are passed to {\verb+pick_two(...)+} as \emph{separate} types. In the context of SCOOP, this corresponds to a Coffman deadlock~\cite{Coffman:1971:SD:356586.356588}. 


The entry point of the program implementing the Dining Philosophers example is the function {\verb+'make+} in the class {\verb+APPLICATION+}, which asks the two philosophers {\verb+p1+} and {\verb+p2+} to adopt a wrong eating strategy as above, possibly leading to a deadlock situation.
The flag enabling the deadlock analysis as in~(\ref{eq:Maude-deadlock}) is set to ``on''.
This information is specified using the instruction {\verb+settings('APPLICATION, 'make, deadlock-on)+}.

Unfortunately, none of the executions of the Dining Philosophers scenario by simply invoking the Maude {\verb+rewrite+} command lead to a deadlock situation. Each of our tests displayed the output:
\vspace{20pt}
{
\begin{verbatim}
                     \||||||||||||||||||/
                   --- Welcome to Maude ---
                     /||||||||||||||||||\
            Maude 2.6 built: Mar 31 2011 23:36:02
            Copyright 1997-2010 SRI International
			
rewrite in SYSTEM :
[...]  settings('APPLICATION, 'make, deadlock-on) 

|- {0}proc(0) :: nil | {0}proc(1) :: nil | {0}proc(3) :: nil
   {0}proc(5) :: nil | {0}proc(7) :: nil
   {0}proc(9) :: nil | {0}proc(11) :: nil
\end{verbatim}
}
\noindent
consisting of a list of processors (including the handlers of both the philosophers and the forks) with empty call stacks ({\verb+:: nil+}). This indicates that every time, the two philosophers proceeded by lifting their forks simultaneously, hence no deadlock was possible.

One possible workaround is to use predefined strategies~\cite{Marti-OlietMV05} in order to guide the rewriting of the Maude rules formalizing SCOOP towards a $\langle deadlock \rangle$ system configuration.
An example of applying such a strategy for the Dining Philosophers case can be tested by running the command:\\
{\verb+> maude SCOOP.maude ..\examples\dining-philosophers-strategy.maude+}

The command
{\verb+srew [...] using init ; parallelism{lock} ; [...] ; deadlock-on+} forces the execution of a {\verb+pick_in_turn+} approach as above.
This determines Maude to first trigger the rule {\verb+[init]+} in the SCOOP formalization in~\cite{DBLP:conf/acsd/MorandiSNM13}. This makes all the required initializations of the \emph{bootstrap} processor. Then, one of the processors that managed to \emph{lock} the necessary resources is (``randomly'') enabled to proceed to the asynchronous execution of its instruction stack, according to the strategy  {\verb+parallelism{lock}+} . The last step of the strategy calls the rule  {\verb+[deadlock-on]+} implementing the Coffman deadlock detection as in~(\ref{eq:Maude-deadlock}).

This time the guided rewriting leads, indeed, to one solution identifying a deadlock.
The relevant parts of the corresponding Maude output are as follows:
{
\begin{verbatim}
                     \||||||||||||||||||/
                   --- Welcome to Maude ---
                     /||||||||||||||||||\
            Maude 2.6 built: Mar 31 2011 23:36:02
            Copyright 1997-2010 SRI International

srewrite in SYSTEM :
[...] settings('APPLICATION, 'make, deadlock-on) 
using init ; parallelism{lock} ; [...] ; deadlock-on .

Solution 1
rewrites: 479677 in 2674887330ms cpu (7379ms real) (0 rewrites/second)
result Configuration: deadlock

No more solutions.
rewrites: 479677 in 2674887330ms cpu (7453ms real) (0 rewrites/second)
\end{verbatim}
}

Nevertheless, such an approach requires lots of ingeniousness (our strategy has more than 300 rules!) and, moreover, is not automated.

\subsection{Model-checking deadlocks}
\label{sec:deadlocking-model-check}

In this section we provide an overview on our approach to model-checking deadlocks for SCOOP, using the LTL Maude model-checker~\cite{DBLP:journals/entcs/EkerMS02}.
As mentioned in the beginning of the current paper, the idea behind our work is to exploit the unifying flavor of the Maude executable semantics of SCOOP~\cite{DBLP:conf/acsd/MorandiSNM13}. The latter integrates both the formalization of the language and its concurrency mechanisms, thus enabling using the semantic framework for program analysis purposes, ``for free''.

One possible way to proceed is by simply running the Maude LTL model-checker via a search command. For model-checking deadlocks, one could invoke {\verb+search [b, d] subject =>* deadlock+}. In a Dining Philosophers setting, for instance, this corresponds to performing a breadth-first search starting with {\verb+subject+} -- the corresponding (intermediate Maude representation of the) SCOOP program -- to a {\verb+deadlock+} state, in zero or more steps of proof ({\verb+=>*+}). The optional arguments {\verb+b+} and {\verb+d+} provide an upper bound in the number of solutions to be found and, respectively, the maximum depth of the search.

Unfortunately, running the aforementioned search command led to the state explosion problem.
At a first look, the issue was caused by the size of the SCOOP formalization in~\cite{DBLP:conf/acsd/MorandiSNM13} which includes all the aspects of a real concurrency model.

As a first step, we reduced this formalization by eliminating the parts that are not relevant in the context of deadlocking; examples include the garbage collection and the exception handling mechanisms.

In addition, we provided a simplified, abstract semantics of SCOOP based on aliasing. This idea stems from the fact that SCOOP processors are known from object references, that may be aliased. Therefore, the SCOOP semantics can be simplified by retaining within the corresponding transition rules only the information important for aliasing. Consider, for instance, the rules (\ref{eq:if-then-else})--(\ref{eq:if-else}) specifying ``{\bf{if}}'' instructions in Section~\ref{sec:SCOOP}. The abstract transition rule omits the evaluation of the conditional and computes the aliasing information similarly to the semantics of {\bf then \ldots else \ldots end} in~(\ref{eq:def-then-else}), in Section~\ref{sec:alias-calc}. The abstraction collects the aliases resulted after the execution of both ``{\bf if}'' and ``{\bf else}'' branches:
\begin{equation}
\label{eq:if-then-else-abstr}
\dfrac{.}{
\begin{array}{cc}
\langle p \,::\,\ifthenelse{e}{St_1}{St_2}\,;\, St,\, \sigma, alias_{old} \rangle \rightarrow\\
\langle p\,::\, St,\, \sigma, alias_{new}\rangle
\end{array}
}
\end{equation}
Observe that the SCOOP system configurations in~(\ref{eq:tuple}) are enriched with a new component $alias\_$ consisting of a set of alias expressions. Above, $alias_{old}$ is the aliasing before the execution of the ``{\bf if}'' instruction, whereas, intuitively, $alias_{new}$ stands for $alias_{old} \exec St_1 \cup alias_{old} \exec St_2$.

As a second step, we analyzed the implementation in~\cite{DBLP:conf/acsd/MorandiSNM13} from a more engineering perspective, and identified a series of design decisions that either slowed down considerably the rewriting or made the search space grow unnecessarily large.

After running some experiments, we understood that the parallelism rule
\begin{equation}
\label{eq:par}
\dfrac{\langle p_1\,::\, St_1,\, \sigma\rangle \rightarrow \langle p_1'\,::\, St_1',\, \sigma'\rangle}{
\langle p_1\,::\, St_1 \mid  p_2\,::\, St_2, \, \sigma\rangle \rightarrow \langle p_1'\,::\, St_1' \mid p_2\,::\, St_2 ,\, \sigma'\rangle
}
\end{equation}
in~\cite{DBLP:conf/acsd/MorandiSNM13} was increasing the rewriting time. Though elegant from the formalization perspective, the usage of this rule was not efficient. Therefore, we eliminated it from the SCOOP semantics and made the remaining rules apply directly, by matching at top. For instance, the abstract rule~(\ref{eq:if-then-else-abstr}) formalizing ``{\bf if}'' instructions in the context of one processor $p$ becomes:
\begin{equation}
\label{eq:if-then-else-abstr-gen}
\dfrac{.}{
\begin{array}{cc}
\langle p \,::\,\ifthenelse{e}{St_1}{St_2}\,;\, St \mid aqs,\, \sigma, alias_{old} \rangle \rightarrow\\
\langle p\,::\, St \mid aqs,\, \sigma, alias_{new}\rangle
\end{array}
}
\end{equation}
for an arbitrary list $aqs$ of processors and their instruction stacks.
For Dining Philosophers, for example, this modification reduced the rewriting time from around 10s to less than 1s.

Recall that SCOOP processors communicate via channels. The implementation in~\cite{DBLP:conf/acsd/MorandiSNM13} creates \emph{fresh} channels (as in~(\ref{eq:if-then-else}), for instance) parameterized by natural indexes. This was inefficient for model-checking purposes, as the state space contained many identical states up-to channel naming.

The implementation of the above observations enabled us to successfully identify a deadlock situation in a Dining Philosophers scenario, by using the Maude LTL model-checker.
The new (reduced) formalization of SCOOP can be found at:\\
{\verb+https://dl.dropboxusercontent.com/u/1356725/SCOOP-SCP.zip+}\\
whereas the example considering two philosophers can be run by executing the command\\
{ \verb+> maude SCOOP.maude ..\examples\dining-philosophers-model-check.maude+}

A bounded search successfully identifies one deadlock:
{
\begin{verbatim}
                     \||||||||||||||||||/
                   --- Welcome to Maude ---
                     /||||||||||||||||||\
            Maude 2.6 built: Mar 31 2011 23:36:02
            Copyright 1997-2010 SRI International
                
search [1, 200] in SYSTEM : 
[...] settings('APPLICATION, 'make, deadlock-on) =>* deadlock .

Solution 1 (state 769167)
states: 769168
rewrites: 342475817 in 2674905322ms cpu (1226403ms real)
\end{verbatim}
}

The amount of time necessary for model-checking is still quite large (approximately 20mins for the example above). However, further improvements may be obtained by following the same recipe of collapsing semantically equivalent states, from the deadlocking perspective. A major source of redundancy is represented by the so-called \emph{regions} in~\cite{DBLP:conf/acsd/MorandiSNM13} that, intuitively, manage all the objects handled by the same processor. Their elimination from the SCOOP abstract state would enable the model-checker to make less identifications, therefore improving the overall time performance.

\section{Discussion}
\label{sec:discussion}

The focus of this paper is on building a toolbox for the analysis of SCOOP programs, with emphasis on an alias analyzer and a deadlock detector. The naturalness of our approach consists in exploiting the recent formalization of SCOOP in~\cite{DBLP:conf/acsd/MorandiSNM13}, that is executable and implemented in Maude~\cite{DBLP:conf/maude/2007}.
This provides a unifying framework that can be used not only to reason about the SCOOP model and its design as in~\cite{DBLP:conf/acsd/MorandiSNM13}, but also to analyze SCOOP programs via Maude rewriting and model-checking.

Of particular interest for the aliasing tool is the calculus introduced in~\cite{Meyer-aliasing-13}, which abstracts the aliasing information in terms of explicit access paths referred to as ``alias expressions''. We provide an extension of this calculus from finite alias relations to infinite ones corresponding to loops and recursive calls. Moreover, we devise an associated RL-based executable specification in the {\K} semantic framework~\cite{DBLP:journals/jlp/RosuS10}. In Theorem~\ref{th:dec-proc} we show that the RL-based machinery implements an algorithm that always terminates with a sound over-approximation of ``may aliasing'', in non-concurrent settings. This is achieved based on the sound (finitely representable) over-approximation of alias expressions in terms of regular expressions, as in Lemma~\ref{lm:reg-expr}. 
The integration of the alias calculus on top of the Maude formalization of SCOOP~\cite{DBLP:conf/acsd/MorandiSNM13} is straightforward, based on the aforementioned executable specification of the calculus. Executions of SCOOP programs can be simulated by simply exploiting the Maude rewriting capabilities, hence the computation of the corresponding aliasing information is immediate.

A similar technique exploiting regular behaviour of (non-concurrent) programs, in order to reason on ``may aliasing'', was previously introduced in~\cite{Asav-aliasing-k}. In short, the results in~\cite{Asav-aliasing-k} utilize {abstract} representations of programs in terms of finite pushdown systems, for which infinite execution paths have a regular structure (or are ``lasso shaped'')~\cite{DBLP:conf/concur/BouajjaniEM97}. Then, in the style of abstract interpretation~\cite{DBLP:journals/jlp/CousotC92}, the collecting semantics is applied over the (finite state) pushdown systems to obtain the alias analysis itself.
The main difference with the results in~\cite{Asav-aliasing-k} consists in how the abstract memory addresses corresponding to pointer variables are represented. In~\cite{Asav-aliasing-k} these  range over a finite set of natural numbers. In this paper we consider alias expressions build according to the calculus in~\cite{Meyer-aliasing-13}, based on program constructs.
The work in~\cite{Asav-aliasing-k} also proposes an implementation of pushdown systems in the {\K}-framework.

We agree that it could be worth presenting our analysis as an abstract interpretation (AI)~\cite{DBLP:journals/jlp/CousotC92}. A modeling exploiting the machinery of AI (based on abstract domains, abstraction and concretization functions, Galois connections, fixed-points, {\emph etc.}) is an interesting topic left for future investigation.

An immediate direction for future work is to identify interesting (industrial) case studies to be analyzed using the framework developed in this paper.
We are also interested in devising heuristics comparing the efficiency and the precision ({\it e.g.}, the number of false positives introduced by the alias approximations) between our approach and other aliasing techniques.

Another research direction is to derive alias-based abstractions for analyzing concurrent programs. We foresee possible connections with the work in~\cite{DBLP:conf/concur/HoareMSW09} on {concurrent Kleene algebra} formalizing choice, iteration, sequential and concurrent composition of programs. The corresponding definitions exploit abstractions of programs in terms of traces of events that can depend on each other. Thus, obvious challenges in this respect include: (i) defining notions of dependence for all the program constructs in this paper, (ii) relating the concurrent Kleene operators to the semantics of the SCOOP concurrency model and (iii) checking whether fixed-points approximating the aliasing information can be identified via fixed-point theorems.

Furthermore, it would be worth investigating whether the graph-based model of alias relations introduced in~\cite{Meyer-aliasing-13} can be exploited in order to derive finite {\K} specifications of the extended alias calculus. In case of a positive answer, the general aim is to study whether this type of representation increases the speed of the reasoning mechanism, and why not -- its accuracy. With the same purpose, we refer to a possible integration with the technique in~\cite{DBLP:conf/pldi/ChaseWZ90} that handles point-to graphs via a stack-based algorithm for fixed-point computations.

Related to deadlock detection, the second contribution of this paper, we provided a formalization based on sets of acquired resources and sets of handlers processors wait to lock in their attempt to execute asynchronously. This definition corresponds to Coffman deadlocks~\cite{Coffman:1971:SD:356586.356588} in the context of SCOOP, occurring whenever there is a set of processors reserving each other circularly. We introduced the equivalent SCOOP semantic rule and discussed the results of using Maude in order to analyze deadlocks in the context of a Dining Philosophers scenario. On the one hand, the SCOOP semantics in~\cite{DBLP:conf/acsd/MorandiSNM13} is very large, as it incorporates all the aspects of a real concurrency model. On the other hand, the formalization in~\cite{DBLP:conf/acsd/MorandiSNM13} was tailored for reasoning about the SCOOP model, and not necessarily about SCOOP applications, hence it includes design decisions (such as index-based naming of communication channels) that make the state space grow unacceptably large for model-checking purposes. As a workaround for the model-checking issue, we presented the idea behind building an abstract semantics of SCOOP based on aliases, together with a series of implementation improvements that eventually enabled the Maude LTL model-checker to correctly identify deadlocks. A survey on abstracting techniques on top of Maude executable semantics is provided in~\cite{DBLP:conf/fct/MeseguerR11}.

The literature on using static analysis~\cite{Landi:1992:USA:161494.161501} and abstracting techniques for (related) concurrency models is considerable. We refer, for instance, to the recent work in~\cite{DBLP:conf/ifm/GiachinoGLLW13} that introduces a framework for detecting deadlocks by identifying circular dependencies in the (finite state) model of so-called contracts that abstract methods in an {OO}-language. 
Nevertheless, the integration of a deadlock analyzer in SCOOP on top of Maude is an orthogonal approach that aims at constructing a RL-based toolbox for SCOOP programs laying over the same semantic framework.

In~\cite{Heussner-PCM15a} SCOOP programs are verified for deadlocks and other behavioral properties using GROOVE~\cite{Ghamarian:2012:MAU:2150576.2150577}. The work in~\cite{Heussner-PCM15a} proposes a redefinition of the most common features of the SCOOP semantics based on graph transformation systems (GPSs). This is a bottom-up approach, as it aims at redefining the SCOOP semantics from scratch via GPSs, orthogonal to our rather top-down strategy of narrowing the original semantics proposed in~\cite{DBLP:conf/acsd/MorandiSNM13}.

As a clear direction for future work we consider designing and analyzing deadlock situations for more SCOOP applications. Based on the experience so far, this would help better understand and observe the SCOOP state space, thus providing hints for further improvements in the context of model-checking, for instance. As we could already see, major advances in this regard are obtained when semantically equivalent states are identified and collapsed within the same equivalence classes. This was the case of the indexed-based communication channels in Section~\ref{sec:deadlocking-model-check}. Similar redundant states are introduced by the so-called ``regions'' in SCOOP ``administrating'' objects handled by the same processor. We foresee their elimination would speed-up the LTL model-checker.

\paragraph{Acknowledgements}
{
We are grateful for valuable comments to the anonymous reviewers of FTSCS'14,
M\u ariuca As\u avoae,  Alexander Kogtenkov, Jos\'e Meseguer, Benjamin Morandi and Sergey Velder.

The research leading to these results has received funding from the
European Research Council under the European Union's Seventh Framework
Programme (FP7/2007-2013) / ERC Grant agreement no. 291389.
}
\vspace{20pt}




 \bibliographystyle{elsarticle-num} 
 \bibliography{circ}


%
%
%
\end{document}